\documentclass[11pt, onecolumn]{IEEEtran}

\usepackage[margin=1in]{geometry}
\linespread{1.5}
\usepackage{mathrsfs}  
\usepackage{amsthm}
\usepackage{amsmath}
\usepackage[colorlinks,citecolor=blue,urlcolor=blue,filecolor=blue,backref=page]{hyperref}
\usepackage{graphicx}
\usepackage{fancyhdr,amssymb,gensymb}
\usepackage[ruled,vlined]{algorithm2e}
\usepackage{caption}
\usepackage{subcaption}
\usepackage{leftidx}
\usepackage{amsfonts,epsfig,latexsym,amscd,multirow,geometry,lscape,bm,pifont}
\usepackage{cancel,algorithmic,verbatim,array,multicol,palatino}
\usepackage{enumitem}
\usepackage[usenames,dvipsnames]{xcolor}
\usepackage{lineno}
\usepackage[normalem]{ulem}
\usepackage{enumitem}
\usepackage{mdframed}
\usepackage{float}
\usepackage[scr=boondox,scrscaled=1.05]{mathalfa}
\usepackage{bbm}
\usepackage{makecell}

\newcommand{\cmmnt}[1]{\ignorespaces}


\newtheorem{theorem}{Theorem}

\newtheorem{remark}{Remark}

\newtheorem{corollary}{Corollary}
\newtheorem{definition}{\noindent \textbf{Definition}}


\newcommand{\C}{\mathcal{C}}

\newcommand{\exE}{\mathbb{E}}




\newcommand{\Bx}{\mathbf{x}}

\newcommand{\BA}{\mathbf{A}}

\newcommand{\BM}{\mathbf{M}}

\newcommand{\BZ}{\mathbf{Z}}

\newcommand{\BIa}{{\boldsymbol{a}}}
\newcommand{\BIb}{{\boldsymbol{b}}}

\newcommand{\BIf}{{\boldsymbol{f}}}
\newcommand{\BIg}{{\boldsymbol{g}}}

\newcommand{\BIk}{{\boldsymbol{k}}}

\newcommand{\BIs}{{\boldsymbol{s}}}

\newcommand{\BIw}{{\boldsymbol{w}}}

\newcommand{\BIy}{{\boldsymbol{y}}}



\newcommand{\CC}{\mathcal{C}}

\newcommand{\CH}{\mathcal{H}}

\newcommand{\CN}{\mathcal{N}}
\newcommand{\CO}{\mathcal{O}}

\newcommand{\CT}{\mathcal{T}}

\newcommand{\CX}{\mathcal{X}}



\newcommand{\BCC}{{\boldsymbol{\mathcal{C}}}}










\newcommand{\IEpsilon}{\epsilon}



\newcommand{\BTheta}{{\boldsymbol{\Theta}}}

\newcommand{\BXi}{{\boldsymbol{\Xi}}}

\newcommand{\BUpsilon}{{\boldsymbol{\Upsilon}}}



\newcommand{\Bgamma}{{\boldsymbol{\gamma}}}

\newcommand{\Btheta}{{\boldsymbol{\theta}}}

\newcommand{\Bmu}{{\boldsymbol{\mu}}}
\newcommand{\Bnu}{{\boldsymbol{\nu}}}

\newcommand{\Bpsi}{{\boldsymbol{\psi}}}

\newcommand{\Bvarphi}{{\boldsymbol{\varphi}}}

\newcommand{\NORMAL}{\CN} 
\newcommand{\GP}{\mathcal{GP}} 
\newcommand{\DIFF}{d} 
\newcommand{\R}{\mathbb{R}} 
\newcommand{\N}{\mathbb{N}} 
\newcommand{\INDI}{\mathbbm{1}} 
\newcommand{\vecone}{\mathbf{1}} 
\newcommand{\veczero}{\mathbf{0}} 
\newcommand{\x}{\Bx} 
\newcommand{\y}{\BIy} 
\newcommand{\tr}{\mathrm{tr}} 
\newcommand{\diag}{\mathrm{diag}} 
\newcommand{\PROB}{\mathbb{P}} 
\newcommand{\EXP}{\mathbb{E}} 
\newcommand{\VAR}{\mathrm{Var}} 
\newcommand{\COV}{\mathrm{Cov}} 

\DeclareMathOperator*{\argmin}{arg\,min} 
\DeclareMathOperator*{\argmax}{arg\,max} 

\begin{document}
\title{Bayesian Spatial Field Reconstruction with Unknown Distortions in Sensor Networks}
\author{
\IEEEauthorblockN{Qikun Xiang$^{1}$, Ido Nevat $^{2}$ and Gareth W. Peters$^3$,
\\
\IEEEauthorblockA{$^1$
School of Physical and Mathematical Sciences, Nanyang Technological University, Singapore}\\
\IEEEauthorblockA{$^2$
TUMCREATE, Singapore \\
\IEEEauthorblockA{$^3$
Department of Actuarial Mathematics and Statistics, Heriot-Watt University, Edinburgh, UK}\\
}
}
}

\maketitle
\begin{abstract}
\noindent
Spatial regression of random fields based on potentially biased sensing information is proposed in this paper. One major concern in such applications is that since it is not known \textit{a-priori} what the accuracy of the collected data from each sensor is, the performance can be negatively affected if the collected information is not fused appropriately. 
For example, the data collector may measure the phenomenon inappropriately, or alternatively, the sensors could be out of calibration, thus introducing random gain and bias to the measurement process.
Such readings would be systematically distorted, leading to incorrect estimation of the spatial field.
To combat this detrimental effect, we develop a robust version of the spatial field model based on a mixture of \textit{Gaussian process} experts. We then develop two different approaches for Bayesian spatial field reconstruction: the first algorithm is the Spatial Best Linear Unbiased Estimator (S-BLUE), in which one considers the quadratic loss function and restricts the estimator to the linear family of transformations; the second algorithm is based on empirical Bayes, which utilises a two-stage estimation procedure to produce accurate predictive inference in the presence of ``misbehaving'' sensors. 
In addition, we develop the distributed version of these two approaches to drastically improve the computational efficiency in large-scale settings.
We present extensive simulation results using both synthetic datasets and semi-synthetic datasets with real temperature measurements and simulated distortions to draw useful conclusions regarding the performance of each of the algorithms.

\textbf{Keywords: } 
Sensor Networks, Gaussian Process, Spatial Linear Unbiased Estimator (S-BLUE), Empirical Bayes, Cross Entropy method (CEM), Iterated Conditional Modes (ICM)
\end{abstract}

\section{Introduction}

In recent years, Wireless Sensor Networks (WSNs) have attracted considerable attention due to their applications in environment monitoring \cite{nevat2015estimation,sun2018optimal}, forecasting \cite{sheng2018short}, surveillance \cite{sohraby2007wireless}, event detection \cite{soltanmohammadi2013decentralized} and tracking \cite{ristic2013calibration}. 
For example, the United States Environmental Protection Agency (EPA) proposed to promote the use of sensor networks for air quality monitoring \cite{watkins2013draft}. 
In this paper, we focus on environmental monitoring applications in which a WSN consists of a collection of spatially distributed sensor nodes with limited energy and communication bandwidth. The sensors make observations of spatial physical phenomena (e.g.\ the concentration of air pollutants such as carbon monoxide and ozone, temperature, humidity, etc.\ \cite{arfire2015model,sun2018optimal}) and communicate the observations to a Fusion Center (FC) \cite{fazel2012random}. The FC then reconstructs the spatial phenomena from these observations at any spatial location of interest, based on which decisions can be made and actions can be performed. 

In many cases, the sensor nodes used to collect the spatial information are unreliable and distort the spatial information. These distortions should be accounted for in spatial field reconstruction. It is therefore crucial to assess and guarantee the veracity, quality and reliability of collected data \cite{restuccia2017quality, fang2017using}. There are multiple reasons for such a behaviour and here we list three common reasons:
\begin{enumerate}[leftmargin=*]
	\item \underline{Uncalibrated sensors:} uncalibrated sensors, if ignored, can lead to severe degradation of the quality of the fields estimation \cite{karl2015possible}. 
	Traditionally, sensors are calibrated in a controlled environment where the physical input is known and then their performance in a given calibration range is tested and verified over certain operating ranges of the environment, before such WSNs are deployed. This is infeasible for large-scale WSNs due to the prohibitive cost as well as inhomogeneity in deployment schedules. Thus, the calibration has to be done through the so-called \textit{blind} or \textit{self-calibration} techniques  \cite{bilen2014convex}. In addition, the reliability of sensor can deteriorate over time \cite{arfire2015model}, making it very challenging to guarantee the quality of information even for an \textit{a priori} calibrated network. Among the classical calibration models, the \textit{gain-offset} response model is widely-used \cite{saukh2014rendezvous, dorffer2018informed}. 
	We consider the problem of jointly estimating the calibration parameters of individual sensors as well as the spatial field values \cite{eldar2018sensor, cho2018non}.
			\item \underline{Compromised sensors due to malicious intent:} the sensors may be physically compromised such that the sensor observations are maliciously altered to disrupt the operation of the WSN \cite{nurellari2018secure}. 
One common type of attacks is the \textit{Byzantine attack} in which a hostile attacker compromises a part of the sensor network in such a manner that the Fusion Center has imperfect knowledge about whether a sensor node has been compromised~\cite{rawat2010collaborative,zhang2014distributed}. In such a case, there may be erroneous information incorporated into the observations from compromised sensors. 
			\item \underline{Unintentional misuse of sensors:} in many cases data collection is done via crowd-sourcing in which private individuals install sensing stations in order to collect and share their data \cite{monahan2013crowdsourcing}. Since this type of data collection is not performed by professionals, in many cases, the sensors are not placed or used properly, thus introducing distortion into the measurements. Applications of crowd-sensing are becoming common recently due to the ubiquity of the Internet of Things (IoT) \cite{arias2018crowd}. Those sensors could be stationary \cite{sun2018optimal, zhang2018spatial, peters2015utilize, nevat2015estimation, nevat2013random} or mobile \cite{unnikrishnan2013sampling,koukoutsidis2017estimating}, depending on the application.
	\end{enumerate}

Many recent works such as \cite{sun2018optimal,xiang2017trust,zhang2018spatial, peters2015utilize, nevat2015estimation, nevat2013random,unnikrishnan2013sampling} utilise the spatial-temporal correlation to reconstruct spatial physical phenomena at all locations. 
For example, \cite{sun2018optimal} studies the placement of multi-type sensors in Gaussian spatial field to achieve optimal spatial field monitoring. 
To circumvent the threat to data reliability in environment monitoring systems, an estimation procedure was proposed in \cite{xiang2017trust} to detect and exclude malicious sensing agents, while accurately performing spatial field reconstruction. 

The main goal of this paper is to develop statistical procedures that reconstruct spatial fields using observations from sensors with possibly unknown distortions. We refer to such observations as \textit{distorted observations}. We use Gaussian processes as the probabilistic model for spatial phenomena, and the sensors are assumed to follow the \textit{gain-offset} distortion model with multi-modal priors to capture distortion characteristics resulted from different processes, such as natural deterioration, mis-calibration, malicious tampering, and unintentional misplacement or misuse. \\
The main contributions are as follows:
\begin{enumerate}[leftmargin=*]
	\item We develop a two-stage Bayesian inference algorithm that jointly infers the distortions of sensors and reconstructs the spatial field at all locations of interest. The algorithm estimates the distortion parameters in an empirical Bayes manner. 
	\item We derive the posterior distribution and the posterior predictive distribution of the model, and show that the exact computation of Bayes estimators is intractable.
	\item We develop the Spatial Best Linear Unbiased Estimator (S-BLUE) for the model, which is highly computationally efficient.
	\item We solve the optimization problem resulted from empirical Bayes estimation via two efficient methods, the Cross-Entropy method (CEM) and the Iterated Conditional Mode (ICM) method. 
	\item We analyse the computational time complexity of the proposed approaches and develop simple distributed versions of these approaches that are computationally more efficient and suitable for large-scale applications. 
	\item We perform synthetic data experiments as well as an experiment with real-world scenarios to validate our model and estimation procedures. The study with real-world scenarios uses a real temperature dataset from US EPA with synthetically generated distortions to show the real-world applicability of the model. 
\end{enumerate}
The remainder of the paper is organized as follows. We present our Bayesian sensor network model in Section~\ref{sec:SystemModel}, which includes the prior distribution of the distortion parameters. In Section~\ref{sec:bayesian}, we derive the posterior distribution of the parameters as well as the posterior predictive distribution. 
Section~\ref{sec:sblue} introduces the S-BLUE and its properties. 
Section~\ref{sec:ebestimator} introduces the approximation of the Bayes estimators via empirical Bayes, and shows that the maximization of the posterior distribution can be done through CEM and ICM.
Section~\ref{sec:distributed} introduces the distributed approaches.
In Section~\ref{sec:synexp} and Section~\ref{sec:realexp}, we perform experiments using synthetic and real datasets. Finally, Section~\ref{sec:conclusion} concludes the paper. 
\section{Sensor Network Model and Assumptions}
\label{sec:SystemModel}

We begin by presenting the statistical model for the spatial physical phenomena, followed by the system model. 
The following notational convention is used throughout this paper. Boldface upper case symbols denote matrices,  boldface lower case symbols denote column vectors, and standard lower case symbols denote scalars or scalar-valued functions, unless otherwise specified. All vectors are column vectors unless otherwise stated.

\subsection{Spatial Gaussian Random Fields Background}
We model the physical phenomenon as spatially dependent continuous process with a spatial correlation structure. Such models have recently become popular due to their mathematical tractability and accuracy \cite{nevat2015estimation,nevat2013random,nevat2012location,xiang2017trust,sheng2018short,sun2018optimal}.
The degree of the spatial correlation in the process increases with the decrease of the separation between two observing locations and can be accurately modeled as a Gaussian random field\footnote{We use Gaussian Process and Gaussian random field interchangeably.}.
 A Gaussian process (GP) defines a distribution over a space of functions and it is completely specified by the equivalent of sufficient statistics for such a process, and is formally defined as follows.
\begin{definition}
\label{DefGP}
(Gaussian process \cite{rasmussen2005gaussian}):
Let $\mathcal{X} \subset \mathbb{R}^d$ be some bounded
domain of a $d$-dimensional real-valued vector space. Denote by $f(\x): \mathcal{X} \mapsto  \mathbb{R}$ a stochastic process parametrized by $\x \in \mathcal{X}$. Then, the random function $f(\x)$ is a Gaussian process if all its finite dimensional distributions are Gaussian, where for any $m \in \mathbb{N}$, the random vectors $\left(f\left(\x_1\right),\ldots, f\left(\x_m\right)\right)$ has multivariate normal distribution.
\end{definition}
We can therefore interpret a GP as formally defined by the following class of random functions:
\begin{align*}
\begin{split}
&\mathcal{F} :=
\{
f\left(\cdot\right) : \mathcal{X} \mapsto  \mathbb{R}
\;
\text{s.t.}\; f\left(\cdot\right)
\sim\GP\left(\mu \left(\cdot\right),\C\left(\cdot,\cdot\right)\right), \;\\
&\;\mathrm{with}\; \mu\left(\x\right):=\exE\left[f\left(\x\right)\right] : \mathcal{X} \mapsto  \mathbb{R},\\
&\C\left(\x,\x'\right):=
\exE\left[\left(f\left(\x\right)-\mu\left(\x\right)\right)
\left(f\left(\x'\right)-\mu\left(\x'\right)\right)
\right] : \mathcal{X} \times \mathcal{X} \mapsto  \mathbb{R}\},
\end{split}
\end{align*}	
where at each point the mean of the function is $\mu(\cdot)$, and the spatial dependence between any two points is given by the covariance function (Mercer kernel)
$\C \left(\cdot,\cdot\right)$ (see detailed discussion in \cite{rasmussen2005gaussian}).\\

\subsection{Sensor Network System Model}
We begin by presenting the system model followed by the prior distribution specifications. 
\begin{enumerate}
\item[A1.] Consider a random real-valued spatial phenomenon $f:\CX\mapsto\R$ defined on the $d$-dimensional domain $\CX\subset\R^d$. 

\item[A2.] Consider a sensor network with $N$ sensors that sense and transmit data to a Fusion Center (FC) over perfect communication channels. The spatial locations of the sensors, denoted $\left(\x_n\right)_{n=1:N}$ ($\x_n\in\CX, n=1,\ldots,N$), are known at the FC.

\item[A3.] The sensor $n$ transmits $M_n\in\N$ observations to the FC. The observations $(y_{n,m})_{m=1:M_n}$ are generated according to the following \textit{acquisition + distortion} mechanism:
\begin{align}
\begin{split}
\widetilde{y}_{n,m}&=f\left(\x_n\right)+\IEpsilon_{n,m} \;\;\;\;(\text{acquisition})\\
y_{n,m}&=\mathcal{T}\left(\widetilde{y}_{n,m};\Bpsi_n\right)\;\;\;\;\;(\text{distortion})
\end{split}
\end{align}
for $m=1,\ldots,M_n$, where $f\left(\x_n\right)$ is the realisation of the random field at location $\x_n$, $\IEpsilon_{n,m}$ represents the additive random noise at the $n$-th sensor, and $\mathcal{T}:\R\mapsto\R$ is the distortion transformation function, parametrized by $\Bpsi_n$. 

\item [A4.]
The distortion transformation $\mathcal{T}$ has the following generic \textit{gain-offset} form:
\begin{align}
\mathcal{T}\left(u\;;\Bpsi_n = \left(a_n,b_n\right)^T\right):=a_n u+b_n,
\end{align}
where $a_n \in \R_+$ and $b_n \in\R$ 
represent the \textit{gain} and \textit{offset} of the $n$-th sensor, respectively. 
This \textit{gain-offset} model has been widely used to describe sensor characteristics \cite{miluzzo2008calibree,saukh2014rendezvous,dorffer2018informed}.

\item[A5.] 
We assume there are $K+1$ ``categories" of possible distortion transformations. The auxiliary indicator random variable $Z_n$ indicates the category to which each sensor's parameters $\Bpsi_n$ belong. We denote by $Z_n=0$, the \textit{default distortion transformation} category, $\Bpsi_n=\Bpsi^0:=(1,0)^T$ i.e.\ no distortion ($\CT(u,\Bpsi^0)=u$), whereas $Z_n=k, k \in\{1,\ldots,K\}$ indicates that the sensor $n$ belongs to the $k$-th \textit{non-default distortion transformation}.

\end{enumerate}

\subsection{Prior Distribution Specifications}
\label{ssec:modelprior}
\begin{itemize}
	\item[P1.] The spatial random field, $f$, is modelled as a Gaussian process (GP) $F$ with a known mean function $\mu:\CX\mapsto\R$ and a known covariance function $\CC:\CX\times\CX\mapsto\R$, that is,
\begin{align}
F\sim\GP(\mu(\;\cdot\;),\CC(\;\cdot\;,\;\cdot\;)).
\end{align}
 \item [P2.] The additive random noise, $\IEpsilon_{n,m}$, follows a normal distribution with mean zero and a fixed known variance $\varsigma^2$,
\begin{align}
\IEpsilon_{n,m}\overset{\text{i.i.d.}}{\sim}\NORMAL(0,\varsigma^2).
\end{align}
\item[P3.] 
We place a prior distribution on $Z_n$, denoted $\pi({Z_n})$, which is a categorical distribution given by
\begin{align}
\pi(Z_n=k)=q^{(n)}_k\;, k=0,\ldots,K,
\end{align}
where $q^{(n)}_k\ge 0$ for $k=0,\ldots,K$, and $\sum_{k=0}^Kq^{(n)}_k=1$. 

\item [P4.]
Each category of distortion characteristics has a distinct sub-population distribution, denoted $\pi_k$, which translates to the following. For $k=0,\ldots,K$,
\begin{align}
\left(\Bpsi_n|Z_n=k\right)\sim\pi_k,
\end{align}
where $\pi_0$ is a degenerate distribution (or atom) at $\Bpsi^0$,
\begin{align}
\pi_0(\Bpsi_n)=\delta_{\Bpsi^0}.
\end{align}
For $k=1,\ldots,K$, assume that $\pi_k$ has density (and we slightly abuse the notation $\pi_k$ to also denote the density function).
Thus, the prior density of $\Bpsi_n$ (marginalized over $Z_n$), denoted by $\pi(\Bpsi_n)$ is a mixture with an atom at $\Bpsi^0$, given by
\begin{align}
\begin{split}
\pi\left(\Bpsi_n\right)=&q^{(n)}_0\delta_{\Bpsi^0}+\sum_{k=1}^K q^{(n)}_k\pi_k(\Bpsi_n).
\end{split}
\end{align}
\item [P5.]
We assume the independence among $(\Bpsi_n)_{1:N}$, and denote them collectively as $\Bpsi$. Let $\pi({\Bpsi})$ denote the prior density function of $\Bpsi$, which factorizes due to independence,
\begin{align}
\pi(\Bpsi)=\prod_{n=1}^N\pi\left(\Bpsi_n\right).
\end{align}
\end{itemize}

The graphical structure of the proposed Bayesian model is shown in Figure~\ref{fig:dag} as a directed-acyclic-graph (DAG) using plate notations. This graphical illustration is helpful for visualizing the dependencies and conditional independence relations between parameters and random variables. 

\begin{figure}[t]
    \centering
    \includegraphics[width=0.65\linewidth]{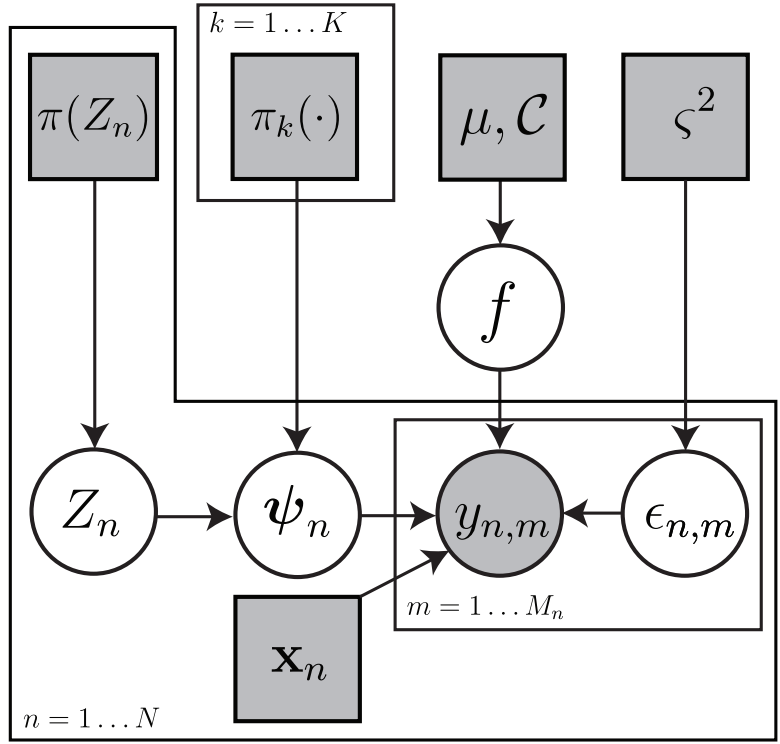}
    \caption{Directed acyclic graph (DAG) of the model using the plate notation. Shaded rectangles represent constants and observed covariates. White circles represent unobserved random variables. The shaded circle represents observed random variable. Arrows represent conditional dependence between two quantities.}
    \label{fig:dag}
\end{figure}

Our objective is to find an estimator $h(\y)$ for $f_*:=f(\x_*)$, the spatial field at location $\x_*$, based on observations $\y:=(y_{n,m})_{n=1:N,m=1:M_n}$. 

\section{Posterior of the Bayesian Model}
\label{sec:bayesian}
In this section we derive the following quantities of interest, based on which the Bayesian estimators will be developed in Sections~\ref{sec:sblue} and \ref{sec:ebestimator}:
\begin{enumerate}[leftmargin=*]
	\item The posterior distribution of the model parameters $\Bpsi$, given by $p(\Bpsi|\y)$ (Theorem~\ref{Th:1}).
\item The posterior predictive distribution $p(f_*|\y)$ (Theorem~\ref{Th:2}).
\end{enumerate}

The following theorem gives the posterior density function of $\Bpsi$.
\begin{theorem}
\label{Th:1}
Let $\Bmu:=\left(\mu\left(\x_n\right)\right)_{1:N}\in\R^N$
be the expected values of the $F$ process at locations $\left(\x_n\right)_{1:N}$, and 
let $\BCC\in\R^{N\times N}$ be the covariance matrix, where $(\BCC)_{ij}=\CC\left(\x_i,\x_j\right)$. 
For $n=1,\ldots,N$, define $g_n:=\sum_{m=1}^{M_n}y_{n,m},\;s_n:=\sum_{m=1}^{M_n}y_{n,m}^2.$
Let $\BM:=\diag[(M_n)_{1:N}]$, $\BIa:=(a_n)_{1:N}$, $\BA:=\diag[\BIa]$, $\BIb:=(b_n)_{1:N}$, $\BIg:=(g_n)_{1:N}$, $\BIs:=(s_n)_{1:N}$. Let $\tilde{g}_n:=a_n^{-1}(M_n^{-1}g_n-b_n)$, $\tilde{\BIg}:=(\tilde{g}_n)_{1:N}=\BA^{-1}(\BM^{-1}\BIg-\BIb)$, $\BUpsilon:=\BCC+\varsigma^2\BM^{-1}$.
Then, the log posterior density function of $\Bpsi$ is given by
\begin{align}
\begin{split}
&\log p(\Bpsi|\y)\\
=&-\frac{1}{2}\left[\tr(\BM)\log2\pi+\tr(\BM\log(\varsigma^2\BA^2))-\log|\varsigma^2\BM^{-1}|\right.\\
&\left.+\log|\BUpsilon|+\varsigma^{-2}\vecone^T\BA^{-2}\BIs-\varsigma^{-2}\BIg^T\BM^{-1}\BA^{-2}\BIg\right.\\
&\left.+(\tilde{\BIg}-\Bmu)^T\BUpsilon^{-1}(\tilde{\BIg}-\Bmu)\right]+\log\pi(\Bpsi)-\log p(\y).
\end{split}
\label{eqn:posterior}
\end{align}
$p(\y)$ is a normalizing constant that is analytically intractable. 
\label{theorem:posterior}
\end{theorem}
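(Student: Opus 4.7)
The strategy is to apply Bayes' rule $p(\Bpsi|\y) = p(\y|\Bpsi)\pi(\Bpsi)/p(\y)$, which reduces the task to computing the marginal likelihood $p(\y|\Bpsi)$. Conditional on $\Bpsi$, the hierarchy $F\sim\GP(\mu,\CC)$ together with the acquisition--distortion mechanism A3--A4 and the Gaussian noise P2 is fully Gaussian, so $p(\y|\Bpsi)$ is available in closed form by marginalizing out the vector of latent field values $\Bf := (f(\x_n))_{1:N}\sim\NORMAL(\Bmu,\BCC)$. Adding $\log\pi(\Bpsi)$ and subtracting $\log p(\y)$ then delivers the posterior log-density.

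First I would write the conditional observation model as $y_{n,m}\mid f(\x_n),\Bpsi_n \sim \NORMAL(a_n f(\x_n) + b_n,\, a_n^2\varsigma^2)$ and expand the resulting $M_n$-fold product. The pivotal algebraic step is the completed-square identity
\begin{equation*}
\sum_{m=1}^{M_n} (y_{n,m} - a_n f_n - b_n)^2 \;=\; M_n a_n^2 (f_n - \tilde g_n)^2 + s_n - M_n^{-1} g_n^2,
\end{equation*}
which isolates the bias/gain-corrected pseudo-observation $\tilde g_n = a_n^{-1}(M_n^{-1}g_n - b_n)$ and, after stacking across sensors and extracting the $\Bf$-independent prefactors, yields the factorization
\begin{equation*}
p(\y|\Bf,\Bpsi) \;=\; c(\Bpsi,\y)\,\NORMAL\bigl(\tilde{\BIg};\,\Bf,\,\varsigma^2\BM^{-1}\bigr),
\end{equation*}
so that the conditional likelihood, viewed as a kernel in $\Bf$, is Gaussian with $\tilde{\BIg}$ acting as an effective observation of $\Bf$ with noise covariance $\varsigma^2\BM^{-1}$.

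Next I would invoke the classical Gaussian convolution identity: since $\Bf\sim\NORMAL(\Bmu,\BCC)$ and the extracted kernel has the form of a Gaussian likelihood for $\tilde{\BIg}\mid\Bf$, the induced marginal is $\tilde{\BIg}\sim\NORMAL(\Bmu,\BCC+\varsigma^2\BM^{-1})=\NORMAL(\Bmu,\BUpsilon)$, supplying the $\log|\BUpsilon|$ and $(\tilde{\BIg}-\Bmu)^T\BUpsilon^{-1}(\tilde{\BIg}-\Bmu)$ terms in \eqref{eqn:posterior}. The remaining pieces come from $\log c(\Bpsi,\y)$: the $\prod_n(2\pi a_n^2\varsigma^2)^{-M_n/2}$ factor rewrites in log form as $-\tfrac12\tr(\BM)\log 2\pi - \tfrac12\tr(\BM\log(\varsigma^2\BA^2))$; the $|\varsigma^2\BM^{-1}|^{1/2}$ correction needed to turn the extracted kernel into a proper Gaussian density in $\tilde{\BIg}$ contributes the $-\log|\varsigma^2\BM^{-1}|$ term with the sign shown in the bracket; and the completed-square residual $\sum_n (s_n - M_n^{-1}g_n^2)/(a_n^2\varsigma^2)$ repackages as $\varsigma^{-2}(\vecone^T\BA^{-2}\BIs - \BIg^T\BM^{-1}\BA^{-2}\BIg)$. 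Combining and subtracting $\log p(\y)$ gives \eqref{eqn:posterior}; intractability of $p(\y)$ then follows from the mixture-with-atom structure of $\pi(\Bpsi)$ imposed in P4--P5, which prevents a closed-form integration over $\Bpsi$.

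The main obstacle I expect is careful bookkeeping rather than conceptual difficulty: correctly separating the per-sensor product of $M_n$ Gaussian factors into (i) the $\Bf$-Gaussian kernel, (ii) the pure $(\Bpsi,\y)$ normalizer, and (iii) the compensating $|\varsigma^2\BM^{-1}|^{1/2}$ factor, so that determinants and scalar coefficients combine with the exact signs shown, and then repackaging the sensor-indexed sums into the compact trace/quadratic-form notation using $\BM$ and $\BA$.
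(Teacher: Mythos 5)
Your proposal is correct and arrives at exactly the expression in \eqref{eqn:posterior}, but it organizes the Gaussian marginalization differently from the paper, and the difference is worth noting. The paper forms the joint density $\log p(\y,\BIf|\Bpsi)$, completes the square in $\BIf$ around the precision-form quantities $\BZ=\varsigma^{-2}\BM+\BCC^{-1}$ and $\Bgamma=\varsigma^{-2}\BM\tilde{\BIg}+\BCC^{-1}\Bmu$, integrates out $\BIf$, and then needs three applications of Woodbury-type identities to convert $\BZ^{-1}$, $\log|\BZ|$ and $\Bgamma^T\BZ^{-1}\Bgamma$ into the covariance-form quantities built from $\BUpsilon=\BCC+\varsigma^2\BM^{-1}$. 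You instead complete the square \emph{per sensor} in the observations, which factors the likelihood as $c(\Bpsi,\y)\,\NORMAL(\tilde{\BIg};\BIf,\varsigma^2\BM^{-1})$, and then a single invocation of the Gaussian convolution identity $\int\NORMAL(\tilde{\BIg};\BIf,\varsigma^2\BM^{-1})\NORMAL(\BIf;\Bmu,\BCC)\,\DIFF\BIf=\NORMAL(\tilde{\BIg};\Bmu,\BUpsilon)$ lands directly in the covariance parametrization — no Woodbury manipulation is needed, and the $N$-dependent $\log 2\pi$ terms cancel between $c(\Bpsi,\y)$ and the marginal normalizer, leaving exactly the $\tr(\BM)\log 2\pi$ coefficient of the theorem. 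I verified your completed-square identity and the resulting bookkeeping of the determinant and residual terms; they check out, including the sign of $-\log|\varsigma^2\BM^{-1}|$ inside the bracket. Your route is shorter and makes the sufficiency of $(g_n,s_n)$ and the role of $\tilde{\BIg}$ as a pseudo-observation with noise covariance $\varsigma^2\BM^{-1}$ more transparent; the paper's route has the side benefit that the intermediate joint density \eqref{eqn:jointconditional} with its explicit $\BZ$, $\Bgamma$ is reused in the proof of Theorem~\ref{theorem:postpred}. One minor caveat: your closing claim that intractability of $p(\y)$ ``follows'' from the atom-plus-mixture prior is an informal justification rather than a proof of non-existence of a closed form, but the paper asserts the same thing without argument, so nothing is lost.
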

\begin{proof}
See Appendix~\ref{apx:proofposterior}.
\end{proof}

\begin{remark}
As shown in the proof of Theorem~\ref{theorem:posterior}, the statistics $g_n$ and $s_n$ are sufficient for~$\Bpsi$. In fact, as we show later, the estimators depend on $\y$ only through $(M_n,g_n,s_n)_{n=1:N}$. Thus, from now on we take $\y:=(M_n,g_n,s_n)_{n=1:N}$ as the summary of observations from all sensors. 
\end{remark}

The following theorem gives the posterior predictive distribution of the model. 
\begin{theorem}
\label{Th:2}
The posterior predictive density is given by
\begin{align}
p(f_*|\y)=\int p(f_*|\y,\Bpsi)p(\Bpsi|\y)\DIFF\Bpsi.
\end{align}
Let $\mu_*:=\mu(\x_*)$, $\CC_*:=\CC(\x_*,\x_*)$, let $\BIk_*:=(\CC\left(\x_n,\x_*\right))_{n=1:N}\in\R^N$ be a column vector, and we have
\begin{align}
p\left(f_*|\y,\Bpsi\right)=&\frac{1}{\sqrt{2\pi \sigma^2_{*}}}\exp\left(-\frac{(f_*-\bar{f}_*)^2}{2\sigma^2_{*}}\right),\label{eqn:condpostpred}\\
\bar{f}_*=&\mu_*+\BIk_*^T\BUpsilon^{-1}\left(\tilde{\BIg}-\Bmu\right),\label{eqn:condpostpredmean}\\
\sigma^2_{*}=&\CC_*-\BIk_*^T\BUpsilon^{-1}\BIk_*.\label{eqn:condpostpredvar}
\end{align}
\label{theorem:postpred}
\end{theorem}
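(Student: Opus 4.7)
The plan is to prove Theorem~\ref{Th:2} in two stages: first establishing the mixture-over-$\Bpsi$ representation by the law of total probability, then deriving the conditional Gaussian form in \eqref{eqn:condpostpred}--\eqref{eqn:condpostpredvar} via a reduction to standard Gaussian process regression. The identity
\[
p(f_*|\y)=\int p(f_*|\y,\Bpsi)\,p(\Bpsi|\y)\,\DIFF\Bpsi
\]
is immediate from $p(f_*,\Bpsi|\y)=p(f_*|\y,\Bpsi)p(\Bpsi|\y)$ and marginalisation over $\Bpsi$, so I will spend most of the effort on characterising $p(f_*|\y,\Bpsi)$.

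Conditional on $\Bpsi=((a_n,b_n))_{1:N}$, the distortion is invertible, so I will first invert it at the observation level by defining $\widetilde{y}_{n,m}:=a_n^{-1}(y_{n,m}-b_n)$. By assumptions A3, A4 and P2, this yields $\widetilde{y}_{n,m}=f(\x_n)+\IEpsilon_{n,m}$ with $\IEpsilon_{n,m}\overset{\text{i.i.d.}}{\sim}\NORMAL(0,\varsigma^2)$. Standard sufficiency for the Gaussian mean then tells me that, given $f(\x_n)$, the sample mean $\tilde{g}_n=a_n^{-1}(M_n^{-1}g_n-b_n)=M_n^{-1}\sum_m\widetilde{y}_{n,m}$ is sufficient for $f(\x_n)$, with conditional distribution $\NORMAL(f(\x_n),\varsigma^2/M_n)$. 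Stacking over $n$, I get $\tilde{\BIg}\,|\,F,\Bpsi\sim\NORMAL(F,\varsigma^2\BM^{-1})$, where $F:=(f(\x_n))_{1:N}$, and hence $p(f_*|\y,\Bpsi)=p(f_*|\tilde{\BIg},\Bpsi)$.

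Now I can invoke the GP prior in P1. Under the GP, the joint vector $(f_*,F^T)^T$ is Gaussian with mean $(\mu_*,\Bmu^T)^T$ and covariance blocks $\CC_*$, $\BIk_*$ and $\BCC$. Because $\tilde{\BIg}=F+\Bnu$ with $\Bnu\sim\NORMAL(\veczero,\varsigma^2\BM^{-1})$ independent of $F$ and $f_*$, the vector $(f_*,\tilde{\BIg}^T)^T$ is also jointly Gaussian, with
\[
\EXP\!\begin{bmatrix}f_*\\ \tilde{\BIg}\end{bmatrix}=\begin{bmatrix}\mu_*\\ \Bmu\end{bmatrix},\qquad \COV\!\begin{bmatrix}f_*\\ \tilde{\BIg}\end{bmatrix}=\begin{bmatrix}\CC_* & \BIk_*^T\\ \BIk_* & \BCC+\varsigma^2\BM^{-1}\end{bmatrix}=\begin{bmatrix}\CC_* & \BIk_*^T\\ \BIk_* & \BUpsilon\end{bmatrix}.
\]
Applying the standard conditional-Gaussian formula (Schur complement) to this block partition gives exactly \eqref{eqn:condpostpred}--\eqref{eqn:condpostpredvar}.

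I do not expect a real obstacle in this proof; the only bookkeeping point is to justify that $\tilde{\BIg}$ is a sufficient summary of $\y$ given $\Bpsi$, so that $p(f_*|\y,\Bpsi)$ collapses to $p(f_*|\tilde{\BIg},\Bpsi)$. This is already implicit in the remark following Theorem~\ref{Th:1}, and can be verified directly by noting that, conditional on $\Bpsi$ and $f(\x_n)$, the likelihood of $(y_{n,m})_{m=1:M_n}$ factorises through $(M_n,g_n,s_n)$, and the terms involving $s_n$ do not depend on $f(\x_n)$. Everything else is a routine Gaussian conditioning computation.
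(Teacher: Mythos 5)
Your proposal is correct and reaches the theorem by the same two essential facts as the paper: that $p(f_*|\y,\Bpsi)$ depends on $\y$ only through $\tilde{\BIg}$, and that $(f_*,\tilde{\BIg})$ is jointly Gaussian conditional on $\Bpsi$ with covariance blocks $\CC_*$, $\BIk_*$, $\BUpsilon$, after which Gaussian conditioning gives \eqref{eqn:condpostpredmean}--\eqref{eqn:condpostpredvar}. Where you differ is in how you justify the reduction to $\tilde{\BIg}$: the paper reuses the explicit joint density $p(\y,\BIf|\Bpsi)$ computed in the proof of Theorem~\ref{theorem:posterior}, observes that its logarithm splits as $q_1(\Bpsi)+q_2(\BIs,\Bpsi)+q_3(\BIg,\BIf,\Bpsi)$, and cancels the $f$-free factors in the ratio $\int p(\y,\BIf,f_*|\Bpsi)\DIFF\BIf\,/\int p(\y,\BIf|\Bpsi)\DIFF\BIf$; you instead invert the affine distortion observation-by-observation, reduce the conditional model to standard GP regression with heteroscedastic noise $\varsigma^2\BM^{-1}$, and invoke sufficiency of the Gaussian sample mean. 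Your route is more self-contained and makes the joint Gaussianity of $(f_*,\tilde{\BIg})$ explicit via the representation $\tilde{\BIg}=\BIf+\Bnu$ with $\Bnu$ independent of the field (the paper only asserts that the moments ``can be easily verified''), at the cost of not leveraging the algebra already done for Theorem~\ref{theorem:posterior}. Both are valid; the one bookkeeping point you flag --- that the $s_n$-dependent factor of the likelihood is free of $f(\x_n)$, so the factorization theorem applies --- is exactly the content of the paper's $q_2$ term, so nothing is missing.
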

\begin{proof}
See Appendix~\ref{apx:proofpostpred}.
\end{proof}

\section{Spatial Best Linear Unbiased Estimator (S-BLUE)}
\label{sec:sblue}
We now derive the Spatial Best Linear Unbiased Estimator (S-BLUE). 
Let $l(h(\y),f_*)$ denote the loss function, i.e.\ the loss incurred when using estimator $h$ when in fact the target quantity is $f_*$. Let $R[\Pi,h]$ denote the Bayes risk of $h$ associated with the prior distribution $\Pi$, which is defined as the expected value of loss taken over $\Pi$, i.e.
\begin{align*}
R[\Pi,h]=\EXP[l(h(\y),f_*)].
\end{align*}
To derive the S-BLUE, we restrict the estimator to be a member of the family of linear estimators, that is $\CH:=\{h(\y)=\BIw^T\y+b\}$, where $\BIw$ is a weight vector and $b$ is an intercept, both of which do not depend on $\y$ or any unknown variables. Hence, the S-BLUE is defined as the optimal linear estimator under quadratic loss, $l(h(\y),f_*)=\left(h(\y)-f_*\right)^2$, and is given by
\begin{align}
\begin{split}
\widehat{h}_{\text{S-BLUE}}=&\argmin\limits_{h\in\CH}R[\Pi,h]=\argmin\limits_{h\in\CH}\EXP\left[\left(h(\y)-f_*\right)^2\right],
\end{split}
\end{align}
where the expectation is taken over the joint distribution of r.v.'s $(f_*,\y,\Bpsi)$. 
The next theorem shows that $\widehat{h}_{\text{S-BLUE}}$ can be expressed in closed-form. 
\begin{theorem}
$\widehat{h}_{\text{S-BLUE}}$ is given by
\begin{align}
\begin{split}
\widehat{h}_{\text{S-BLUE}}(\y)=&\mu_*+\COV[\bar{\BIg},f_*]^T\COV[\bar{\BIg}]^{-1}(\bar{\BIg}-\EXP[\bar{\BIg}]),
\end{split}
\label{eqn:sblueest}
\end{align}
where $\bar{\BIg}=\BM^{-1}\BIg$. $\EXP[\bar{\BIg}],\COV[\bar{\BIg},f_*],\COV[\bar{\BIg}]$ can all be expressed in closed-form (let $\odot$ denote matrix entry-wise multiplication),
\begin{align}
\EXP[\bar{\BIg}]=&\diag\left(\EXP[\BIa]\right)\Bmu+\EXP[\BIb],\label{eqn:sbluey}\\
\COV[\bar{\BIg},f_*]=&\diag\left(\EXP[\BIa]\right)\BIk_*,\label{eqn:sblueyfstar}\\
\begin{split}
\COV[\bar{\BIg}]=&\EXP[\BIa\BIa^T]\odot\left(\BCC+\varsigma^2\BM^{-1}+\Bmu\Bmu^T\right)\\
&+\diag(\Bmu)\left(\EXP[\BIa\BIb^T]+\EXP[\BIa\BIb^T]^T\right)\\
&+\EXP[\BIb\BIb^T]-\EXP[\bar{\BIg}]\EXP[\bar{\BIg}]^T.
\end{split}\label{eqn:sblueyy}
\end{align}
The various terms in the above equations can all be computed in closed-form, and the details are given in the proof.
\label{theorem:sblue}
\end{theorem}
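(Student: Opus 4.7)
The plan has three main steps: (i) reduce the linear estimator family to affine functions of the sensor-wise averages $\bar{\BIg}=\BM^{-1}\BIg$ via an exchangeability argument; (ii) apply the classical BLUE normal equations to the reduced problem; (iii) compute the required first and second joint moments of $(f_*,\bar{\BIg})$ in closed form from the hierarchical independence structure.

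For steps (i)--(ii), observe that conditional on $(a_n,b_n,f(\x_n))$ the raw observations $y_{n,m}=a_n f(\x_n)+b_n+a_n\IEpsilon_{n,m}$ are i.i.d.\ Gaussian across $m=1,\ldots,M_n$, so marginally the within-sensor observation vector is exchangeable. Since the MSE $\EXP[(\BIw^T\y+c-f_*)^2]$ is a convex quadratic in the weights $(\BIw,c)$ and is invariant under permutations of the weights attached to observations from the same sensor, Jensen's inequality applied to the orbit-average of any minimiser forces the within-sensor weights to be equal. The minimiser therefore lies in the sub-family $\{\BIv^T\bar{\BIg}+c:\BIv\in\R^N,\,c\in\R\}$. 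Substituting this parametrisation, differentiating the MSE in $(\BIv,c)$ and solving the resulting normal equations yields the standard BLUE solution $\BIv^*=\COV[\bar{\BIg}]^{-1}\COV[\bar{\BIg},f_*]$ and $c^*=\mu_*-(\BIv^*)^T\EXP[\bar{\BIg}]$, which upon substitution produces \eqref{eqn:sblueest}.

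For step (iii), I would write $\bar{g}_n=a_n f(\x_n)+b_n+a_n\bar{\IEpsilon}_n$ with $\bar{\IEpsilon}_n:=M_n^{-1}\sum_{m=1}^{M_n}\IEpsilon_{n,m}$, and use the mutual independence of $F$, $\Bpsi$, and the noise together with $\EXP[\IEpsilon_{n,m}]=0$. The tower property immediately yields $\EXP[\bar{g}_n]=\EXP[a_n]\mu(\x_n)+\EXP[b_n]$ and $\COV[\bar{g}_n,f_*]=\EXP[a_n]\CC(\x_n,\x_*)$, which vectorise into \eqref{eqn:sbluey} and \eqref{eqn:sblueyfstar}. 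For \eqref{eqn:sblueyy} I would expand $\EXP[\bar{g}_i\bar{g}_j]$ into its nine cross-products, discard the ones containing an odd power of $\bar{\IEpsilon}$, substitute $\EXP[\bar{\IEpsilon}_i\bar{\IEpsilon}_j]=\varsigma^2 M_i^{-1}\INDI_{i=j}$ and $\EXP[f(\x_i)f(\x_j)]=\CC(\x_i,\x_j)+\mu(\x_i)\mu(\x_j)$, and then repackage the surviving scalar quantities into compact Hadamard and outer-product matrix notation. The prior moments $\EXP[\BIa]$, $\EXP[\BIb]$, $\EXP[\BIa\BIa^T]$, $\EXP[\BIa\BIb^T]$, $\EXP[\BIb\BIb^T]$ are then evaluated entry-wise from the mixture prior $\pi(\Bpsi_n)=q_0^{(n)}\delta_{\Bpsi^0}+\sum_{k=1}^K q_k^{(n)}\pi_k(\Bpsi_n)$ together with the cross-sensor independence in P5, so that off-diagonal entries factorise into products of marginal first moments while diagonal entries reduce to single-sensor mixture integrals.

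The main obstacle I anticipate is the reduction in step (i): it would be tempting but incorrect to invoke the Bayes-sufficient statistics $(M_n,g_n,s_n)$ from the remark following Theorem~\ref{theorem:posterior}, because the restriction to the linear family $\CH$ is strictly stronger than restricting to functions of sufficient statistics, and the reduction to $\bar{\BIg}$ must instead rest on exchangeability plus convexity of the MSE. The rest is a systematic but tedious algebraic bookkeeping of nine scalar cross-terms into their compact matrix counterparts; particular care is required in tracking on which side of $\EXP[\BIa\BIb^T]$ the diagonal factor $\diag(\Bmu)$ appears, so that the resulting covariance matrix $\COV[\bar{\BIg}]$ comes out symmetric.
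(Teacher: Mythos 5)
Your proposal is correct and follows essentially the same route as the paper's own proof: reduction of the linear family to affine functions of $\bar{\BIg}$ by within-sensor symmetry (which the paper asserts in one line and you justify more carefully via orbit-averaging and convexity of the quadratic risk), the normal equations for the optimal affine coefficients, and closed-form evaluation of the joint moments of $(f_*,\bar{\BIg})$ (the paper via the laws of total expectation and covariance conditional on $\Bpsi$, you by direct expansion of $\bar{g}_n=a_nf(\x_n)+b_n+a_n\bar{\IEpsilon}_n$ --- the two computations coincide). Your closing caution about the placement of $\diag(\Bmu)$ is well taken: the careful expansion yields the symmetric form $\diag(\Bmu)\EXP[\BIa\BIb^T]+\EXP[\BIa\BIb^T]^T\diag(\Bmu)$, whereas the expression $\diag(\Bmu)\left(\EXP[\BIa\BIb^T]+\EXP[\BIa\BIb^T]^T\right)$ appearing in the theorem and in the paper's proof is not symmetric in general and agrees with it only in special cases such as a constant mean function (which is what the paper's experiments use).
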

\begin{proof}
See Appendix~\ref{apx:proofsblue}.
\end{proof}
The next corollary shows the unbiasedness property of $\widehat{h}_{\text{S-BLUE}}$.
\begin{corollary}
$\widehat{h}_{\text{S-BLUE}}$ is unbiased, that is, $\EXP[\widehat{h}_{\text{S-BLUE}}(\y)]=\EXP[f_*]$.
\end{corollary}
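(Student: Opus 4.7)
The plan is to take expectations on both sides of the closed-form expression (\ref{eqn:sblueest}) and exploit the fact that the estimator is affine in the centred statistic $\bar{\BIg}-\EXP[\bar{\BIg}]$. Concretely, the quantities $\mu_*$, $\COV[\bar{\BIg},f_*]$, and $\COV[\bar{\BIg}]$ appearing in (\ref{eqn:sblueest}) are deterministic moments of the joint distribution of $(f_*,\y,\Bpsi)$, not random functions of $\y$; only $\bar{\BIg}$ is random. Therefore, by linearity of expectation,
\begin{align*}
\EXP[\widehat{h}_{\text{S-BLUE}}(\y)] &= \mu_* + \COV[\bar{\BIg},f_*]^T\COV[\bar{\BIg}]^{-1}\bigl(\EXP[\bar{\BIg}]-\EXP[\bar{\BIg}]\bigr) \\
&= \mu_*.
\end{align*}

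To conclude, I would invoke prior assumption P1, under which $F\sim\GP(\mu(\cdot),\CC(\cdot,\cdot))$, so that $\EXP[f_*] = \EXP[f(\x_*)] = \mu(\x_*) = \mu_*$. Combining the two equalities yields $\EXP[\widehat{h}_{\text{S-BLUE}}(\y)] = \EXP[f_*]$, as required. There is no real obstacle here: the result is an immediate corollary of the affine structure of the S-BLUE in Theorem~\ref{theorem:sblue}, and the only point worth being careful about is that the outer expectation is taken over the full joint law of $(f_*,\y,\Bpsi)$, so $\EXP[\bar{\BIg}]$ already averages over the random distortion parameters $\Bpsi$ (as made explicit in (\ref{eqn:sbluey})), which is exactly the quantity that gets subtracted inside the parentheses of (\ref{eqn:sblueest}).
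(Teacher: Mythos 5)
Your proof is correct and takes exactly the route the paper intends: the paper's entire justification is ``linearity of expectation,'' and you have simply made explicit that the coefficients in (\ref{eqn:sblueest}) are deterministic, that the centred term $\bar{\BIg}-\EXP[\bar{\BIg}]$ has zero mean under the joint law of $(f_*,\y,\Bpsi)$, and that $\EXP[f_*]=\mu_*$ by P1. Nothing further is needed.
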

\begin{proof}
It is shown via the linearity of expectation.
\end{proof}
The following corollary gives the closed-form expression of $R[\Pi,\widehat{h}_{\text{S-BLUE}}]$ (under quadratic loss).
\begin{corollary}
Under quadratic loss, the Bayes risk associated with $\widehat{h}_{\text{S-BLUE}}$ is given by
\begin{align}
\begin{split}
R[\Pi,\widehat{h}_{\text{S-BLUE}}]=&\CC_*-\COV[\bar{\BIg},f_*]^T\COV[\bar{\BIg}]^{-1}\COV[\bar{\BIg},f_*].
\end{split}
\label{eqn:sbluerisk}
\end{align}
\label{corollary:sblueexploss}
\end{corollary}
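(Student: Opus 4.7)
The plan is to write $R[\Pi,\widehat{h}_{\text{S-BLUE}}] = \EXP\!\left[(\widehat{h}_{\text{S-BLUE}}(\y)-f_*)^2\right]$ and exploit the structure of the S-BLUE as an affine function of $\bar{\BIg}$, together with the unbiasedness established in the preceding corollary, so that the quadratic risk reduces to a variance computation.

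First I would introduce the shorthand $\BIw := \COV[\bar{\BIg}]^{-1}\COV[\bar{\BIg},f_*]$, so that by Theorem~\ref{theorem:sblue} one has $\widehat{h}_{\text{S-BLUE}}(\y) = \mu_* + \BIw^T(\bar{\BIg}-\EXP[\bar{\BIg}])$, and hence the estimation error takes the form
\begin{align*}
\widehat{h}_{\text{S-BLUE}}(\y)-f_* = \BIw^T(\bar{\BIg}-\EXP[\bar{\BIg}]) - (f_*-\mu_*).
\end{align*}
By the unbiasedness corollary, this error has mean zero, so $R[\Pi,\widehat{h}_{\text{S-BLUE}}]$ equals its variance.

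Next I would expand the variance using bilinearity of covariance, noting that $\bar{\BIg}-\EXP[\bar{\BIg}]$ and $f_*-\mu_*$ are both centered:
\begin{align*}
\VAR\!\left[\BIw^T(\bar{\BIg}-\EXP[\bar{\BIg}]) - (f_*-\mu_*)\right] = \BIw^T\COV[\bar{\BIg}]\BIw - 2\BIw^T\COV[\bar{\BIg},f_*] + \VAR[f_*].
\end{align*}
Substituting $\BIw = \COV[\bar{\BIg}]^{-1}\COV[\bar{\BIg},f_*]$, the first term simplifies to $\COV[\bar{\BIg},f_*]^T\COV[\bar{\BIg}]^{-1}\COV[\bar{\BIg},f_*]$ and so does the cross term (up to the factor of $2$), and the two combine to leave a single negative copy. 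Finally, $\VAR[f_*] = \CC(\x_*,\x_*) = \CC_*$ by definition of the GP prior, yielding exactly \eqref{eqn:sbluerisk}.

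There is no real obstacle here, since everything is a consequence of the closed-form expression for $\widehat{h}_{\text{S-BLUE}}$ in Theorem~\ref{theorem:sblue} and the unbiasedness already shown. The only point worth flagging is the justification that $\VAR[f_*]=\CC_*$, which follows directly from assumption P1 (the GP prior on $F$ has covariance function $\CC(\cdot,\cdot)$), and the observation that the plug-in of the optimal $\BIw$ produces the characteristic ``$-\COV^T\COV^{-1}\COV$'' reduction familiar from the standard linear minimum mean square error (LMMSE) formula.
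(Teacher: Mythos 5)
Your proof is correct and follows essentially the same route as the paper, which simply substitutes the closed-form S-BLUE into $R[\Pi,h]=\EXP[(h(\y)-f_*)^2]$ and expands; your variance decomposition and the plug-in of $\BIw=\COV[\bar{\BIg}]^{-1}\COV[\bar{\BIg},f_*]$ make explicit the cancellation the paper leaves implicit. The identification $\VAR[f_*]=\CC_*$ from the GP prior is likewise exactly what the paper relies on.
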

\begin{proof}
Substituting (\ref{eqn:sblueest}) into $R[\Pi,h]=\EXP[l(h(\y),f_*)]$ gives the proof. 
\end{proof}

The complete S-BLUE algorithm is shown in Algorithm~\ref{alg:sblue}. Notice that much of the computation in S-BLUE does not require $\y$, and thus can be performed in an ``offline phase'', e.g.\ when the sensor network is deployed. In Algorithm~\ref{alg:sblue}, only Line~\ref{alglin:sblue-inference} needs to be computed when the sensor measurements are taken. Below is a line-by-line analysis of the computational time complexity of Algorithm~\ref{alg:sblue}.

\begin{itemize}[leftmargin=*]
\item Offline phase:
\begin{itemize}
\item Line~\ref{alglin:sblue-ab}: Evaluating $\EXP[\BIa],\EXP[\BIb]$ takes $\CO(KN)$; evaluating $\EXP[\BIa\BIa^T],\EXP[\BIb\BIb^T],\EXP[\BIa\BIb^T]$ takes $\CO(KN^2)$.
\item Line~\ref{alglin:sblue-muc}: Evaluating $\mu_*,\CC_*$ takes $\CO(1)$; evaluating $\Bmu,\BIk_*$ takes $\CO(N)$; evaluating $\BCC$ takes $\CO(N^2)$. 
\item Line~\ref{alglin:sblue-ECov}: Evaluating $\EXP[\bar{\BIg}],\COV[\bar{\BIg},f_*]$ takes $\CO(N)$; evaluating $\COV[\bar{\BIg}]$ takes $\CO(N^2)$. 
\item Line~\ref{alglin:sblue-weight}: Evaluating $\widehat{\BIw}$ takes $\CO(N^3)$, since it involves solving a linear system; evaluating $\widehat{b}$ takes $\CO(N)$. 
\item Line~\ref{alglin:sblue-risk}: Evaluating $R[\Pi,\widehat{h}_{\text{S-BLUE}}]$ takes $\CO(N)$, since $\COV[\bar{\BIg}]^{-1}\COV[\bar{\BIg},f_*]$ has been computed in Line~\ref{alglin:sblue-weight}.
\end{itemize}
\item Online phase:
\begin{itemize}
\item Line~\ref{alglin:sblue-inference}: Evaluating $\widehat{h}_{\text{S-BLUE}}(\y)$ takes $\CO(N)$. 
\end{itemize}
\end{itemize}

Overall, the offline phase of Algorithm~\ref{alg:sblue} takes $\CO(N^3)$ (assuming that $K\le N$), and its online phase takes $\CO(N)$. It is worth noting that there are techniques to further reduce the computational complexity of the matrix inversion, e.g.\ through low-rank approximation (see~\cite{rasmussen2005gaussian}). 

\begin{algorithm}[t]
\KwIn{$\x_*$, $(\x_n)_{n=1:N}$, $\y$, $\left(q^{(n)}_k\right)_{n=1:N,k=0:K}$, $(\pi_k)_{k=1:K}$}
\KwOut{Estimator $\widehat{h}_{\text{S-BLUE}}(\y)$, Bayes risk $R[\Pi,\widehat{h}_{\text{S-BLUE}}]$}
\noindent\rule{6.5cm}{0.4pt} \textit{Offline phase} \noindent\rule{6.5cm}{0.4pt} \\
\nl Compute $\EXP[\BIa],\EXP[\BIb],\EXP[\BIa\BIa^T],\EXP[\BIb\BIb^T],\EXP[\BIa\BIb^T]$ (see Appendix~\ref{apx:proofsblue}). \label{alglin:sblue-ab}\\
\nl Compute $\mu_*,\Bmu,\BIk_*,\BCC,\CC_*$ by their respective definitions. \label{alglin:sblue-muc}\\
\nl Compute $\EXP[\bar{\BIg}],\COV[\bar{\BIg},f_*],\COV[\bar{\BIg}]$ by Equations (\ref{eqn:sbluey}) - (\ref{eqn:sblueyy}). \label{alglin:sblue-ECov}\\
\nl Compute $\widehat{\BIw}=\COV[\bar{\BIg}]^{-1}\COV[\bar{\BIg},f_*]$, $\widehat{b}=\mu_*-\widehat{\BIw}^T\EXP[\bar{\BIg}]$. \label{alglin:sblue-weight}\\
\nl Compute $R[\Pi,\widehat{h}_{\text{S-BLUE}}]$ by (\ref{eqn:sbluerisk}). \label{alglin:sblue-risk}\\
\noindent\rule{6.5cm}{0.4pt} \textit{Online phase} \noindent\rule{6.5cm}{0.4pt} \\
\nl After collecting sensor measurements $\BIy$, compute $\widehat{h}_{\text{S-BLUE}}(\y)=\widehat{\BIw}^T\bar{\BIg}+\widehat{b}$. \label{alglin:sblue-inference}\\
\nl \Return $\widehat{h}_{\text{S-BLUE}}(\y)$, $R[\Pi,\widehat{h}_{\text{S-BLUE}}]$.

    \caption{{\bf Spatial-Best Linear Unbiased Estimator (S-BLUE)}}
    \label{alg:sblue}
\end{algorithm}

\section{Empirical Bayes Estimators}
\label{sec:ebestimator}
We now derive an algorithm in which we do not restrict the estimator to be linear. The idea of empirical Bayes is to plug in a point estimate $\hat{\Bpsi}$ into (\ref{eqn:condpostpred}) to approximate the posterior predictive distribution, i.e.\ $p(\Bpsi|\y)\approx\delta_{\hat{\Bpsi}}$.
This gives us the corresponding empirical Bayes estimators, which minimize the expected posterior loss, conditional on $\hat{\Bpsi}$:
\begin{align}
\label{eqn:bayesestimator}
\widehat{h}_{\text{EB}}(\y,\hat{\Bpsi})=\argmin_{h(\y)}\EXP[l(h(\y),f_*)|\y,\hat{\Bpsi}].
\end{align}
To complete the specification of the estimator we are required to define appropriate \textit{loss functions}.
We present a few widely used loss functions and their corresponding approximate Bayes estimators.

\begin{enumerate}[leftmargin=*]
	\item \textbf{Quadratic loss function:} $l_{\text{quad}}(h(\y),f_*)=(h(\y)-f_*)^2$.
	
The corresponding Bayes estimator is the conditional expectation (minimum mean squared error estimator, or MMSE estimator),
\begin{align*}
\begin{split}
\widehat{h}_{\text{MMSE}}(\y)=&\EXP[f_*|\y]=\int_{\R}f_*p(f_*|\y)\DIFF f_*,
\end{split}
\end{align*}
where $p(f_*|\y)$ is given in Theorem~\ref{theorem:postpred}.
The empirical Bayes version of $\widehat{h}_{\text{MMSE}}(\y)$ is given by
\begin{align*}
\begin{split}
\widehat{h}_{\text{EB-MMSE}}(\y,\hat{\Bpsi})=&\EXP[f_*|\y,\hat{\Bpsi}]=\int_{\R}f_*p(f_*|\y,\hat{\Bpsi})\DIFF f_*\\
\approx&\widehat{h}_{\text{MMSE}}(\y),
\end{split}
\end{align*}
where $p(f_*|\y,\hat{\Bpsi})$ is given in (\ref{eqn:condpostpred}).

\item \textbf{Absolute loss function:} $l_{\text{abs}}(h(\y),f_*)=|h(\y)-f_*|$.

The corresponding Bayes estimator is the conditional median (least absolute deviation estimator, or LAD estimator),
\begin{align*}
\begin{split}
\widehat{h}_{\text{LAD}}(\y)=&\mathrm{median}(f_*|\y).
\end{split}
\end{align*}
The empirical Bayes version of $\widehat{h}_{\text{LAD}}(\y)$ is given by
\begin{align*}
\begin{split}
\widehat{h}_{\text{EB-LAD}}(\y,\hat{\Bpsi})=&\mathrm{median}(f_*|\y,\hat{\Bpsi})\approx \widehat{h}_{\text{LAD}}(\y).
\end{split}
\end{align*}

\item \textbf{$0-1$ loss function:} $l_{\text{0-1}}(h(\y),f_*)=\INDI_{\{f_*<h(\y)\le f_*+\DIFF f_*\}}$.

The corresponding Bayes estimator is the conditional mode (maximum a posteriori estimator, or MAP estimator),
\begin{align*}
\begin{split}
\widehat{h}_{\text{MAP}}(\y)=&\argmax_{f_*}p(f_*|\y).
\end{split}
\end{align*}
The empirical Bayes version of $\widehat{h}_{\text{MAP}}(\y)$ is given by
\begin{align*}
\widehat{h}_{\text{EB-MAP}}(\y,\hat{\Bpsi})=&\argmax_{f_*}p(f_*|\y,\hat{\Bpsi})\approx \widehat{h}_{\text{MAP}}(\y).
\end{align*}
\end{enumerate}
For all the aforementioned Bayes estimators, we first need to find a point estimator for $\Bpsi$. To achieve this, we find the MAP estimator of $\Bpsi$, which aims at maximizing the posterior density $p(\Bpsi|\y)$ given in Theorem~\ref{theorem:posterior}. The MAP estimator is then given by
\begin{align}
\begin{split}
\hat{\Bpsi}=&\argmax_{\Bpsi}p\left(\Bpsi|\y\right)\\
=&\argmax_{\Bpsi}
\left[
-\frac{1}{2}\left\{\tr(\BM)\log2\pi+\tr(\BM\log(\varsigma^2\BA^2))\right.\right.\\
&\left.-\log|\varsigma^2\BM^{-1}|+\log|\BUpsilon|+\varsigma^{-2}\vecone^T\BA^{-2}\BIs\right.\\
&\left.-\varsigma^{-2}\BIg^T\BM^{-1}\BA^{-2}\BIg+(\tilde{\BIg}-\Bmu)^T\BUpsilon^{-1}(\tilde{\BIg}-\Bmu)\right\}\\
&+\log\pi(\Bpsi)\bigg].
\end{split}
\label{eqn:ebobjective}
\end{align}
Note that the optimization objective does not involve $p(\y)$, since it is a constant. Thus, the empirical Bayes estimators could be computed in the following two-stage algorithm:
\begin{enumerate}[leftmargin=*]
\item Compute $\hat{\Bpsi}$ by solving the optimization problem $\argmax_{\Bpsi}p\left(\Bpsi|\y\right)$.
\item Plug in $\hat{\Bpsi}$ to compute $\widehat{h}_{\text{EB}}(\y,\hat{\Bpsi})$.
\end{enumerate}
In order to solve the optimization problem in Step I, we develop two algorithms. The first approach is a stochastic optimization method named Cross-Entropy method (CEM), and the second approach is the Iterated Conditional Modes (ICM) which is based on iterative greedy search.

\subsection{Cross-Entropy Method (CEM)}
\label{ssec:cemethod}

The Cross-Entropy method (CEM) is an stochastic algorithm that is suitable for solving combinatoric or continuous optimization problems. 
Suppose we have a maximization problem with a unique optimizer,
\begin{align*}
\hat{\Bvarphi}=\argmax_{\Bvarphi\in\Phi}J(\Bvarphi),
\end{align*}
where $J(\cdot)$ is the objective function, $\Phi$ is the domain, and $\Bvarphi$ is the parameter vector. 
We solve the optimization problem by considering the level sets of the objective function $\{\Bvarphi:J(\Bvarphi)\ge\gamma\}$, for $\gamma\in\R$. When $\gamma=\hat{J}=\max_{\Bvarphi\in\Phi}J(\Bvarphi)$, we have $\{\Bvarphi:J(\Bvarphi)\ge\gamma\}=\{\hat{\Bvarphi}\}$. 
Next, let us define a family of probability measures $\{\PROB_{\Btheta}:\Btheta\in\BTheta\}$ on $\Phi$ with densities $\{w_{\Btheta}:\Btheta\in\BTheta\}$ that are parameterized by $\Btheta\in\BTheta$. Let $\EXP_{\Btheta}$ denote the expectation taken with respect to $\PROB_{\Btheta}$. Let us fix $\Btheta$ and $\gamma$, and define a rare event probability problem,
\begin{align*}
\PROB_{\Btheta}[J(\Bvarphi)\ge\gamma]=\EXP_{\Btheta}[\INDI_{\{J(\Bvarphi)\ge\gamma\}}]=\int_{\Phi}\INDI_{\{J(\Bvarphi)\ge\gamma\}}w_{\Btheta}(\Bvarphi)\DIFF\Bvarphi.
\end{align*}
Instead of approximating this probability naively by sampling from $w_{\Btheta}$, the importance sampling method is used. Let $w_{\tilde{\Btheta}}$ denote the importance sampler, where $\tilde{\Btheta}\in\BTheta$. Importance sampling approximates the rare event probability by,
\begin{align}
\begin{split}
\PROB_{\Btheta}[J(\Bvarphi)\ge\gamma]=&\int_{\Phi}\INDI_{\{J(\Bvarphi)\ge\gamma\}}w_{\Btheta}(\Bvarphi)\DIFF\Bvarphi\\
=&\EXP_{\tilde{\Btheta}}\left[\INDI_{\{J(\Bvarphi)\ge\gamma\}}\frac{w_{\Btheta}(\Bvarphi)}{w_{\tilde{\Btheta}}(\Bvarphi)}\right]\\
\approx&\frac{1}{S}\sum_{s=1}^S\INDI_{\{J(\tilde{\Bvarphi}^{[s]})\ge\gamma\}}\frac{w_{\Btheta}(\tilde{\Bvarphi}^{[s]})}{w_{\tilde{\Btheta}}(\tilde{\Bvarphi}^{[s]})},
\end{split}
\end{align}
where $\tilde{\Bvarphi}^{[1]},\ldots,\tilde{\Bvarphi}^{[S]}$ are $S$ independent samples generated from $w_{\tilde{\Btheta}}$. The optimal importance sampler $w_{\hat{\Btheta}}$ is selected through the cross-entropy criterion,
\begin{align}
\begin{split}
\hat{\Btheta}=&\argmin_{\tilde{\Btheta}\in\BTheta}\int_{\Phi}\INDI_{\{J(\Bvarphi)\ge\gamma\}}w_{\Btheta}(\Bvarphi)\log\frac{w_{\Btheta}(\Bvarphi)}{w_{\tilde{\Btheta}}(\Bvarphi)}\DIFF\Bvarphi\\
\approx&\argmax_{\tilde{\Btheta}\in\BTheta}\frac{1}{S}\sum_{s=1}^S\INDI_{\{J(\Bvarphi^{[s]})\ge\gamma\}}\log w_{\tilde{\Btheta}}(\Bvarphi^{[s]}),
\end{split}
\label{eqn:cemcecriterion}
\end{align}
where $\Bvarphi^{[1]},\ldots,\Bvarphi^{[S]}$ are $S$ independent samples generated from $w_{\Btheta}$. Notice that the last line of (\ref{eqn:cemcecriterion}) corresponds to the maximum likelihood estimation (MLE) of $\tilde{\Btheta}$ when the samples are $\{\Bvarphi^{[s]}:J(\Bvarphi^{[s]})\ge~\gamma\}$. The CEM starts from an initial sampling distribution $w_{\hat{\Btheta}_0}$ and iteratively updates the threshold $\hat{\gamma}$ and the sampling distribution $w_{\hat{\Btheta}}$. For a detailed introduction of CEM, see \cite{rubinstein1999cross}. The complete procedure is detailed in Algorithm~\ref{alg:cem}.

\begin{algorithm}[t]
\KwIn{number of importance samples $S$, $\rho\in(0,1)$ (typically $0.001\le\rho\le0.01$), initial sampler parameter $\hat{\Btheta}_0$, objective function $J$}
\KwOut{$\hat{\Bvarphi}=\argmax_{\Bvarphi\in\Phi}J(\Bvarphi)$}
\nl $\hat{\gamma}_0\leftarrow-\infty$, $t\leftarrow1$. \label{alglin:cem-init}\\
\nl \Repeat {termination condition is triggered}{
\nl Generate $S$ independent samples $\Bvarphi^{[1]},\ldots,\Bvarphi^{[S]}$ from $w_{\hat{\Btheta}_{t-1}}$. \label{alglin:cem-sample}\\
\nl Compute $J\left(\Bvarphi^{[1]}\right),\ldots,J\left(\Bvarphi^{[S]}\right)$. \label{alglin:cem-score}\\
\nl $\hat{\gamma}_t\leftarrow$ the $(1-\rho)$-sample quantile of $J\left(\Bvarphi^{[1]}\right),\ldots,J\left(\Bvarphi^{[S]}\right)$. \label{alglin:cem-thres}\\
\nl $\hat{\Btheta}_{t}\leftarrow\argmax_{\Btheta\in\BTheta}\frac{1}{S}\sum_{s=1}^S \INDI_{\left\{J\left(\Bvarphi^{[s]}\right)\geq\hat{\gamma}_t\right\}}\log w_{\Btheta}\left(\Bvarphi^{[s]}\right)$.\label{alglin:cem-mle}\\
\nl $t\leftarrow t+1$. \label{alglin:cem-counter}\\
}
\nl Set $\hat{\Bvarphi}$ to be the sample with the largest $J(\hat{\Bvarphi})$ so far. \label{alglin:cem-max}\\
\nl \Return $\hat{\Bvarphi}$.

    \caption{{\bf Cross-Entropy Method (CEM)-Based Optimizer}}
    \label{alg:cem}
\end{algorithm}

We can now link CEM to the MAP estimation problem in (\ref{eqn:ebobjective}). We define the objective function as the (un-normalized) log-posterior conditional density,
\begin{align*}
J(\Bpsi)=&\log p\left(\Bpsi|\y\right)+\log \pi\left(\Bpsi\right).
\end{align*}
For the purpose of demonstrating the CEM, let us assume here that under prior distribution $\pi_k$, $(\log a_n,b_n)$ have a bivariate normal distribution,
\begin{align*}
\left(\begin{smallmatrix}\log a_n\\b_n\end{smallmatrix}\right)\sim\NORMAL(\Bnu_k,\BXi_k).
\end{align*}
Note that this can be easily adapted for other prior distributions. We choose the family of sampling distributions such that,
\begin{align*}
w_{\Btheta}(\Bpsi)=&\prod_{n=1}^Nw_{\Btheta^{(n)}}(\Bpsi_n),\\
w_{\Btheta^{(n)}}(\Bpsi_n)=&r^{(n)}_0\delta_{\Bpsi^0}+\sum_{k=1}^Kr^{(n)}_k\NORMAL((\log a_n,b_n)^T;\tilde{\Bnu}^{(n)}_k,\widetilde{\BXi}^{(n)}_k).
\end{align*}
Here, we have $\Btheta=(\Btheta^{(n)})_{n=1:N}=((r^{(n)}_k)_{0:K}$, $(\tilde{\Bnu}^{(n)}_k)_{1:K}$, $(\widetilde{\BXi}^{(n)}_k)_{1:K})_{n=1:N}$. Before running the CEM algorithm, we set $\hat{\Btheta}_0$ such that $w_{\hat{\Btheta}_0}$ coincides with the prior distribution $\pi$. 
Under this setting, the optimization in Line~\ref{alglin:cem-mle} of Algorithm~\ref{alg:cem} corresponds to the MLE of $\Btheta$, given independent samples $\{\Bpsi^{[s]}:J(\Bpsi^{[s]})\ge\hat{\gamma}_t\}$. This decomposes into sub-problems
\begin{align*}
\hat{\Btheta}^{(n)}=\argmax_{\Btheta^{(n)}}\sum_{s=1}^S\INDI_{\{J(\Bpsi^{[s]})\ge\hat{\gamma}_t\}}\log w_{\Btheta^{(n)}}(\Bpsi^{[s]}_n).
\end{align*}
Since $w_{\Btheta^{(n)}}$ is a mixture distribution, the MLE does not admit a closed-form solution and we use the expectation-maximization (EM) algorithm. 
The EM algorithm is an iterative procedure that computes a local optimum of the likelihood function. 
For notational simplicity, we drop the superscripts and subscripts with $n$ for now, and denote the samples used to obtain the MLE as $\Bpsi^{[1]},\ldots,\Bpsi^{[S]}$. To apply the EM algorithm, let us first introduce the auxiliary variables. Let $z^{[s]}\in\{0,\ldots,K\}$ for $s=1,\ldots,S$ be the discrete auxiliary variables, such that
\begin{align*}
p(\Bpsi^{[s]}|z^{[s]}=0;\Btheta)=&\delta_{\Bpsi^0}\\
p(\Bpsi^{[s]}|z^{[s]}=k;\Btheta)=&\NORMAL((\log a,b)^T;\tilde{\Bnu}_k,\widetilde{\BXi}_k), \\
&\text{ for }k=1,\ldots,K,\\
p(z^{[s]}=k;\Btheta)=&r_k, \text{ for }k=0,\ldots,K.
\end{align*}
This gives the marginal distributions $w_{\Btheta}(\Bpsi^{[s]})$ above. The EM algorithm starts with an initial estimate $\hat{\Btheta}_0$, and iteratively updates the estimated parameter through two steps. In the expectation step (E-step), a lower bound of the log-likelihood function is constructed by first computing the conditional distributions of the auxiliary variables given the estimate of the parameters in the $t$-th iteration
\begin{align*}
p(z^{[s]}=k|\Bpsi^{[s]};\hat{\Btheta}_t)=&\frac{p(\Bpsi^{[s]}|z^{[s]}=k;\hat{\Btheta}_t)p(z^{[s]}=k;\hat{\Btheta}_t)}{\sum_{k'=0}^Kp(\Bpsi^{[s]}|z^{[s]}=k';\hat{\Btheta}_t)p(z^{[s]}=k';\hat{\Btheta}_t)},
\end{align*}
for $k=0,\ldots,K$, 
and then computing the expected value of the log-likelihood function with respect to this conditional distribution, given by
\begin{align*}
Q(\Btheta;\hat{\Btheta}_{t})=\sum_{s=1}^S\EXP_{(z^{[s]}|\Bpsi^{[s]};\hat{\Btheta}_t)}[\log w_{\Btheta}(\Bpsi^{[s]},z^{[s]})].
\end{align*}
In the maximization step (M-step), the estimated parameters in the $(t+1)$-th iteration are computed by maximizing the lower bound $Q(\Btheta;\hat{\Btheta}_{t})$, that is, 
\begin{align*}
\hat{\Btheta}_{t+1}=\argmax_{\Btheta\in\BTheta}Q(\Btheta;\hat{\Btheta}_{t}).
\end{align*}
The algorithm is summarized in Algorithm~\ref{alg:emalg}. For details about the EM algorithm, see \cite{bilmes1998gentle}. 

\begin{algorithm}[t]
\KwIn{$S$ samples $\Bpsi^{[1]},\ldots,\Bpsi^{[S]}$, initial estimate $\hat{\Btheta}_0$}
\KwOut{Estimated parameter $\hat{\Btheta}$}
\nl $\hat{p}\leftarrow-\infty$, $t\leftarrow 0$. \label{alglin:em-init}\\
\nl \Repeat {termination condition is triggered}{ 
\nl \label{alglin:em-loop}\For{$s=1\ldots S$} 
{
\nl Compute $p^{[s]}_k:=p(z^{[s]}=k|\Bpsi^{[s]};\hat{\Btheta}_t)$, for $k=0,\ldots,K$. \label{alglin:em-conditional}\\
}
\nl \textit{(E-step)} Construct $Q(\Btheta;\hat{\Btheta}_{t})=\sum_{s=1}^S\sum_{k=0}^Kp^{[s]}_kp(\Bpsi^{[s]}|z^{[s]}=k;\Btheta)$. \label{alglin:em-E}\\
\nl \textit{(M-step)} $\hat{\Btheta}_{t+1}\leftarrow\argmax_{\Btheta\in\BTheta}Q(\Btheta;\hat{\Btheta}_{t})$. This decomposes into $K+1$ weighted MLE problems. \label{alglin:em-M}\\
\nl $t\leftarrow t+1$. \label{alglin:em-counter}\\
}
\nl $\hat{\Btheta}\leftarrow\hat{\Btheta}_{t}$. \label{alglin:return}\\
\nl \Return $\hat{\Btheta}$.
    \caption{{\bf Expectation-Maximization (EM) Algorithm}}
    \label{alg:emalg}
\end{algorithm}

To analyse the computational time complexity of CEM, let us first assume that each weighted MLE problem in Line~\ref{alglin:em-M} of the EM algorithm (Algorithm~\ref{alg:emalg}) takes $\CO(S)$. For example, this is the case when the sampling distribution is a mixture of normal distributions. The computational time complexity of Algorithm~\ref{alg:emalg} is analysed as follows:
\begin{itemize}[leftmargin=*]
\item Line~\ref{alglin:em-loop}-\ref{alglin:em-conditional}: Each iteration takes $\CO(K)$, the complexity is $\CO(SK)$. 
\item Line~\ref{alglin:em-E}: No actual computation is performed.
\item Line~\ref{alglin:em-M}: Computation of $K+1$ weighted MLE takes $\CO(SK)$.
\end{itemize}
Thus, the total complexity of the EM algorithm is $T_{\text{EM}}=\CO(J_{\text{EM}}SK)$, where $J_{\text{EM}}$ is the number of iterations, which is usually quite small in practice. With this, we analyse the computational time complexity of the CEM estimator as follows:
\begin{itemize}[leftmargin=*]
\item Preparation: 
\begin{itemize}
\item Evaluation of $\log|\BUpsilon|$ and $\BUpsilon^{-1}$ takes $\CO(N^3)$. 
\end{itemize}
\item Algorithm~\ref{alg:cem}:
\begin{itemize}
\item Line~\ref{alglin:cem-sample}: Generation of $S$ samples takes $\CO(SK)$ due to $w_{\hat{\Btheta}_{t-1}}$ being a ($K+1$)-mixture. 
\item Line~\ref{alglin:cem-score}: Evaluation of the objective function (\ref{eqn:ebobjective}) $S$ times takes $\CO(SN^2)$. 
\item Line~\ref{alglin:cem-thres}: Computation of the sample quantile takes $\CO(S)$. 
\item Line~\ref{alglin:cem-mle}: Computation of the maximizer using Algorithm~\ref{alg:emalg} takes $T_{\text{EM}}$. 
\end{itemize}
\item Inference: 
\begin{itemize}
\item Evaluation of $\widehat{h}_{\text{EB}}(\y,\hat{\Bpsi})$ by (\ref{eqn:condpostpredmean}), (\ref{eqn:condpostpredvar}) takes $\CO(N^2)$. 
\end{itemize}
\end{itemize}
The total complexity of the CEM estimator is $\CO(N^3+J_{\text{CEM}}(SN^2+T_{\text{EM}}))$, where $J_{\text{CEM}}$ is the number of iterations in Algorithm~\ref{alg:cem}. Since the EM algorithm converges rather quickly in practice, the complexity of CEM estimator is $\CO(N^3+J_{\text{CEM}}SN^2)$. It is worth noting that the preparation phase of the CEM estimation procedure can be run ``offline'', i.e.\ before having access to sensor measurements. Hence, the CEM estimator has complexity $\CO(N^3)$ in the offline phase, and $\CO(J_{\text{CEM}}SN^2)$ in the online phase. The same remark on the computational complexity of matrix inversion applies here as above.

\subsection{Iterated Conditional Modes (ICM)}
\label{ssec:icm}
We propose a second optimization method to find the MAP estimator $\hat{\Bpsi}$ which is based on iterative greedy search. 
Since $\hat{\Bpsi}\in\R^{2N}$, the dimensionality of the optimization problem is high if $N$ is large. In addition, for $n=1,\ldots,N$, the distribution of $\Bpsi_n$ contains an atom. Hence, to improve the computational efficiency in these settings, we seek to reduce the complexity of the MAP estimation by reducing the global search problem to a sequence of iterative local search problems of iterated conditional modes (ICM)~\cite{besag1986statistical}. 
Let $\Bpsi_{(-n)}:=(\Bpsi_{m})_{m\ne n}$. In each iteration of ICM, we fix $\Bpsi_{(-n)}$ and compute the mode of the conditional posterior distribution $\hat{\Bpsi}_n=\argmax_{\Bpsi_n}p\left(\Bpsi_n|\y,\Bpsi_{(-n)}\right)$ through the conjugate gradient algorithm. ICM converges to a local maximum of the objective function in the sense that $\hat{\Bpsi_n}$ is the mode of the conditional posterior distribution for $n=1,\ldots,N$. 

To optimize the conditional posterior distributions, we decomposed them (up to a normalizing constant) as follows, for $n=1,\ldots,N$,
\begin{align*}
\begin{split}
p\left(\Bpsi_n|\y,\Bpsi_{(-n)}\right)\propto&p\left(\y|\Bpsi\right)\pi\left(\Bpsi_n\right)\\
\propto&p\left(\y_n|\y_{(-n)},\Bpsi\right)\pi\left(\Bpsi_n\right),
\end{split}
\end{align*}
where $\y_n=(y_{n,m})_{m=1:M_n}$, $\y_{(-n)}=(y_{i,m})_{i\ne n,m=1:M_i}$. 
Let $\mu_n:=\mu\left(\x_n\right)$, $\Bmu_{(-n)}:=(\mu_i)_{i\ne n}$. Let $\CC_n:=\CC(\x_n,\x_n)$. Let $\BUpsilon_{(-n,n)}\in\R^{\left(N-1\right)}$ denote the sub-matrix of $\BUpsilon$ involving the cross-terms between sensor $n$ and the rest of sensors. Let $\BUpsilon_{(-n)}\in\R^{\left(N-1\right)\times\left(N-1\right)}$ denote the sub-matrix of $\BUpsilon$ related to sensors other than $n$. Let $\tilde{\BIg}_{(-n)}:=(\tilde{g}_i)_{i\ne n}$.
Completely analogous to Theorem~\ref{theorem:postpred}, we have that,
\begin{align*}
\left(F(\x_n)|\y_{(-n)},\Bpsi\right)\sim\NORMAL\left(\nu_n,\zeta_n\right),
\end{align*}
where
\begin{align}
\begin{split}
\nu_n=&\mu_n+{\BUpsilon_{(-n,n)}}^T{\BUpsilon_{(-n)}}^{-1}\left(\tilde{\BIg}_{(-n)}-\Bmu_{(-n)}\right),
\end{split}\label{eqn:icmcondmean}\\
\begin{split}
\zeta_n=&\CC_n+\varsigma^2-{\BUpsilon_{(-n,n)}}^T{\BUpsilon_{(-n)}}^{-1}{\BUpsilon_{(-n,n)}}.
\end{split}\label{eqn:icmcondcov}
\end{align}
One verifies that $\nu_n$ and $\zeta_n$ do not depend on $\Bpsi_n$.
Therefore, following a derivation similar to that in Theorem~\ref{theorem:posterior}, we have,
\begin{align}
\begin{split}
&\log p\left(\y_n|\y_{(-n)},\Bpsi\right)\\
=&\log\left[\int p(\y_n|f(\x_n),\Bpsi)p(f(\x_n)|\y_{(-n)},\Bpsi)\DIFF f(\x_n)\right]\\
=&-\frac{1}{2}\left[M_n\log2\pi+(M_n-1)\log(\varsigma^2a_n^2)\right.\\
&\left.+\log(a_n^{2}M_n\zeta_n+\varsigma^2a_n^2)+\varsigma^{-2}a_n^{-2}(s_n-M_n^{-1}g_n^2)\right.\\
&\left.+(\zeta_n+\varsigma^2M_n^{-1})^{-1}(\tilde{g}_n-\nu_n)^2\right].
\end{split}
\end{align}
Thus, the log-conditional likelihood as well as its partial derivatives can be efficiently evaluated. The ICM algorithm then separately treats the continuous and discrete parts of the parameter space, that is, comparing $\sup_{\Bpsi_n\neq\Bpsi^0}\log p\left(\Bpsi_n|\y,\Bpsi_{(-n)}\right)$ and $\log p\left(\Bpsi^0|\y,\Bpsi_{(-n)}\right)$. 

The details of the ICM algorithm are shown in Algorithm~\ref{alg:icm}. 
The computational time complexity of the ICM estimator is analysed as follows:
\begin{itemize}[leftmargin=*]
\item Preparation:
\begin{itemize}
\item Evaluation of $\BUpsilon^{-1}$ takes $\CO(N^3)$. 
\end{itemize}
\item Algorithm~\ref{alg:icm}: 
\begin{itemize}
\item Line~\ref{alglin:icm-preloop}-\ref{alglin:icm-prepare}: Evaluation of $\nu_n,\zeta_n$ for $n=1,\ldots,N$ takes $\CO(N^3)$. Notice that the computation of ${\BUpsilon_{(-n)}}^{-1}{\BUpsilon_{(-n,n)}}$ can be simplified to $\CO(N^2)$ via block-wise inversion once $\BUpsilon^{-1}$ has been computed. 
\item Line~\ref{alglin:icm-cg}: Assume that the 2-dimensional optimization takes $\CO(T_{\text{CG}})$. 
\item Line~\ref{alglin:icm-update1}-\ref{alglin:icm-update2}: The complexity is $\CO(1)$. 
\end{itemize}
\item Inference: 
\begin{itemize}
\item Evaluation of $\widehat{h}_{\text{EB}}(\y,\hat{\Bpsi})$ by (\ref{eqn:condpostpredmean}), (\ref{eqn:condpostpredvar}) takes $\CO(N^2)$. 
\end{itemize}
\end{itemize}
The total complexity of the ICM estimator is thus $\CO(N^3+J_{\text{ICM}}NT_{\text{CG}})$, where $J_{\text{ICM}}$ is the number of iterations in Algorithm~\ref{alg:icm}. Similar to CEM, the preparation phase of ICM and Line~\ref{alglin:icm-preloop}-\ref{alglin:icm-prepare} of Algorithm~\ref{alg:icm} can also be run offline. Hence, the ICM estimator has complexity $\CO(N^3)$ in the offline phase, and $\CO(N^2+J_{\text{ICM}}NT_{\text{CG}})$ in the online phase. The same remark on the computational complexity of matrix inversion applies here as above. 

\begin{algorithm}[t]
\KwIn{$(\x_n)_{n=1:N}$, $\y$, $\left(q^{(n)}_k\right)_{n=1:N,k=0:K}$, $(\pi_k)_{k=1:K}$}
\KwOut{Estimation of posterior mode $\hat{\Bpsi}$}
\nl Randomly initialize $\hat{\Bpsi}$.\\
\nl \label{alglin:icm-preloop}\For{$n=1\ldots N$}
{
\nl Compute $\nu_n,\zeta_n$ from (\ref{eqn:icmcondmean}) and (\ref{eqn:icmcondcov}). \label{alglin:icm-prepare}\\
}
\nl \Repeat {termination condition is triggered}{
\nl \For{$n=1\ldots N$}
{
\nl $\tilde{\Bpsi}_n\leftarrow \argmax_{\Bpsi_n\in\R_+\times\R,\Bpsi_n\neq\Bpsi^0} \log p\left(\Bpsi_n|\y,\hat{\Bpsi}_{(-n)}\right)$, by running the conjugate gradient algorithm disregarding the atom at $\Bpsi^0$. \label{alglin:icm-cg}\\
\nl \label{alglin:icm-update1}\If{$\log p\left(\Bpsi^0|\y,\hat{\Bpsi}_{(-n)}\right)<\log p\left(\tilde{\Bpsi}_n|\y,\hat{\Bpsi}_{(-n)}\right)$}{
\nl $\hat{\Bpsi}_n\leftarrow \tilde{\Bpsi}_n$.\\
}\Else{
\nl $\hat{\Bpsi}_n\leftarrow \Bpsi^0$. \label{alglin:icm-update2}\\
}
}
}
\nl \Return $\hat{\Bpsi}$.
    \caption{{\bf Iterative Conditional Modes (ICM) Algorithm}}
    \label{alg:icm}
\end{algorithm}
To account for the multi-modality of the posterior distribution, we adopt a standard multiple start initialization strategy, that is to run ICM from a number of random initial estimates. This corresponds to running Algorithm~\ref{alg:icm} multiple times with different initial values.

\section{Distributed Approaches}
\label{sec:distributed}
Now, let us consider a large scale sensor network with $I\ge 2$ clusters of sensors, where each cluster has a cluster head that locally aggregates data to be sent to the global Fusion Center. The clusters are assumed to be disjoint. We reconstruct the spatial field in a distributed manner. For a spatial location $\x_*\in\CX$, the estimation of $f(\x_*)$ is done in two steps:
\begin{enumerate}[leftmargin=*]
\item Sensors within a cluster transmit the measurements to the cluster head (CH), and the CH performs a local estimation of $f(\x_*)$.
\item The $I$ CHs transmit their local estimations to the FC, where local estimations are fused into the global estimation. 
\end{enumerate}
In this section, we develop fusion algorithms based on local S-BLUE and local empirical Bayes estimators.

\subsection{Distributed S-BLUE}
Suppose that each cluster head $i$ produces the local S-BLUE $\widehat{h}_{\text{S-BLUE}}^{(i)}(\y^{(i)})$ and its Bayes risk $R[\Pi,\widehat{h}_{\text{S-BLUE}}^{(i)}]$, where $\y^{(i)}$ denotes the sensor measurements collected from cluster $i$. The goal is to derive a rule to fuse $\{\widehat{h}_{\text{S-BLUE}}^{(i)}(\y^{(i)})\}_{i=1:I}$ into a single estimator $\widehat{h}_{\text{DS-BLUE}}(\y)$, where the ``D'' in DS-BLUE stands for ``distributed''. We restrict ourselves by considering $\widehat{h}_{\text{DS-BLUE}}(\y)$ as a convex combination of $\{\widehat{h}_{\text{S-BLUE}}^{(i)}(\y^{(i)})\}_{i=1:I}$, that is, $\widehat{h}_{\text{DS-BLUE}}(\y):=\sum_{i=1}^Ic_i\widehat{h}_{\text{S-BLUE}}^{(i)}(\y^{(i)})$, where $c_i\ge0$ for $i=1,\ldots,I$, and $\sum_{i=1}^Ic_i=1$ are the constraints required to preserve the unbiasedness of DS-BLUE. The Bayes risk of $\widehat{h}_{\text{DS-BLUE}}$ is given by,
\begin{align*}
\begin{split}
&R[\Pi,\widehat{h}_{\text{DS-BLUE}}]\\
=&\sum_{i=1}^I\sum_{j=1}^Ic_ic_j\EXP\left[\left(\widehat{h}_{\text{S-BLUE}}^{(i)}(\y^{(i)})-f_*\right)\left(\widehat{h}_{\text{S-BLUE}}^{(j)}(\y^{(j)})-f_*\right)\right]\\
\le&\sum_{i=1}^I\sum_{j=1}^Ic_ic_j\sqrt{R[\Pi,\widehat{h}_{\text{S-BLUE}}^{(i)}]}\sqrt{R[\Pi,\widehat{h}_{\text{S-BLUE}}^{(j)}]}\\
=&\left(\sum_{i=1}^Ic_i\sqrt{R[\Pi,\widehat{h}_{\text{S-BLUE}}^{(i)}]}\right)^2:=\overline{R}[\Pi,\widehat{h}_{\text{DS-BLUE}}],
\end{split}
\end{align*}
where the inequality in the third line above is by the Cauchy-Schwarz inequality. 
The coefficients $(c_i)_{1:I}$ are chosen to minimize the upper bound $\overline{R}[\Pi,\widehat{h}_{\text{DS-BLUE}}]$, which gives the following optimal values,
\begin{align}
c_i^\star=\begin{cases}
1 & \text{if }i=\argmin_{1\le j\le I}R[\Pi,\widehat{h}_{\text{S-BLUE}}^{(j)}],\\
0 & \text{otherwise}.
\end{cases}
\label{eqn:dsbluecoef}
\end{align}
It is assumed above that there is no tie among $(R[\Pi,\widehat{h}_{\text{S-BLUE}}^{(j)}])_{j=1:I}$. If there is one, the tie can be broken arbitrarily. The DS-BLUE is defined via the optimal coefficients in (\ref{eqn:dsbluecoef}), $\widehat{h}_{\text{DS-BLUE}}(\y):=\sum_{i=1}^Ic^\star_i\widehat{h}_{\text{S-BLUE}}^{(i)}(\y^{(i)})$.

The complete DS-BLUE algorithm is shown in Algorithm~\ref{alg:dsblue}. To analyse the computational complexity of Algorithm~\ref{alg:dsblue}, assume for now that each cluster contains at most $N_c$ sensor nodes. The line-by-line analysis of its computational time complexity is as follows:
\begin{itemize}[leftmargin=*]
\item Offline phase:
\begin{itemize}
\item Cluster head: 
\begin{itemize}
\item Line~\ref{alglin:dsblue-dist}: The complexity is $\CO(N_c^3)$, same as the offline phase of Algorithm~\ref{alg:sblue}. 
\end{itemize}
\item Fusion center: 
\begin{itemize}
\item Line~\ref{alglin:dsblue-fus}: Evaluation of $(c_i^\star)_{1:I}$ takes $\CO(I)$. 
\end{itemize}
\end{itemize}
\item Online phase:
\begin{itemize}
\item Cluster head: 
\begin{itemize}
\item Line~\ref{alglin:dsblue-inf-dist}: Evaluating $\widehat{h}_{\text{S-BLUE}}^{(i)}(\y^{(i)})$ takes $\CO(N_c)$. 
\end{itemize}
\item Fusion center:
\begin{itemize}
\item Line~\ref{alglin:dsblue-inf-fus}: Evaluating $\widehat{h}_{\text{DS-BLUE}}(\y)$ takes $\CO(I)$. 
\end{itemize}
\end{itemize}
\end{itemize}
Overall, for cluster heads, the offline phase takes $\CO(N_c^3)$, and the online phase takes $\CO(N_c)$. For the fusion center, both the offline phase and the online phase take $\CO(I)$. 

\begin{algorithm}[t]
\KwIn{$\x_*$, $(\x_n)_{n=1:N}$, $\y$, $\left(q^{(n)}_k\right)_{n=1:N,k=0:K}$, $(\pi_k)_{k=1:K}$}
\KwOut{Estimator $\widehat{h}_{\text{DS-BLUE}}(\y)$, upper bound on the Bayes risk $\overline{R}[\Pi,\widehat{h}_{\text{DS-BLUE}}]$}
\noindent\rule{6.5cm}{0.4pt} \textit{Offline phase} \noindent\rule{6.5cm}{0.4pt} \\
\nl \For{each cluster head $i$ in parallel}{
\nl Compute $\widehat{\BIw}^{(i)}$, $\widehat{b}^{(i)}$, $R[\Pi,\widehat{h}_{\text{S-BLUE}}^{(i)}]$ as in Algorithm~\ref{alg:sblue} and transmit $R[\Pi,\widehat{h}_{\text{S-BLUE}}^{(i)}]$ to the fusion center. \label{alglin:dsblue-dist}\\
}
\nl The fusion center computes the optimal coefficients $(c_i^\star)_{1:I}$ by (\ref{eqn:dsbluecoef}) and the upper bound on the Bayes risk $\overline{R}[\Pi,\widehat{h}_{\text{DS-BLUE}}]$. \label{alglin:dsblue-fus}\\
\noindent\rule{6.5cm}{0.4pt} \textit{Online phase}  \noindent\rule{6.5cm}{0.4pt} \\
\nl \For{each cluster head $i$ in parallel}{
\nl Collect measurements $\y^{(i)}$ from sensors within the cluster $i$. \\
\nl Compute local S-BLUE $\widehat{h}_{\text{S-BLUE}}^{(i)}(\y^{(i)})={{\widehat{\BIw}}^{(i)}}{}^T\bar{\BIg}^{(i)}+\widehat{b}^{(i)}$ and transmit to the fusion center. \label{alglin:dsblue-inf-dist}\\
}
\nl The fusion center computes $\widehat{h}_{\text{DS-BLUE}}(\y)=\sum_{i=1}^Ic^\star_i\widehat{h}_{\text{S-BLUE}}^{(i)}(\y^{(i)})$. \label{alglin:dsblue-inf-fus}\\
\nl \Return $\widehat{h}_{\text{DS-BLUE}}(\y)$, $\overline{R}[\Pi,\widehat{h}_{\text{DS-BLUE}}]$.
    \caption{{\bf Distributed Spatial-Best Linear Unbiased Estimator (DS-BLUE)}}
    \label{alg:dsblue}
\end{algorithm}

\subsection{Distributed Empirical Bayes Estimator}
Similar to DS-BLUE, suppose that each cluster head $i$ computes an approximate posterior distribution $p(f_*|\BIy^{(i)},\hat{\Bpsi}^{(i)})$, where $\y^{(i)}$ denotes the sensor measurements collected from cluster $i$ and $\hat{\Bpsi}^{(i)}$ is a point estimate of the distortion parameters of sensors in cluster $i$. Let $\widehat{h}_{\text{EB-MMSE}}^{(i)}$ denote the empirical Bayes MMSE estimator produced by cluster head $i$. Same as DS-BLUE, let $\widehat{h}_{\text{DEB-MMSE}}(\y):=\sum_{i=1}^Ic_i\widehat{h}_{\text{EB-MMSE}}^{(i)}(\y^{(i)})$ be a convex combination of the local estimators, where $c_i\ge0$ for $i=1,\ldots,I$, and $\sum_{i=1}^Ic_i=1$. We choose the fusion rule to be similar to $\widehat{h}_{\text{DS-BLUE}}$, that is, 
\begin{align}
c_i^\star=\begin{cases}
1 & \text{if }i=\argmin_{1\le j\le I}\VAR[f_*|\BIy^{(j)},\hat{\Bpsi}^{(j)}],\\
0 & \text{otherwise},
\end{cases}
\label{eqn:debcoef}
\end{align}
where ties are broken arbitrarily. 

The complete distributed empirical Bayes algorithm is shown in Algorithm~\ref{alg:dseb}. For the cluster heads, the computational time complexity of this algorithm is the same as in the non-distributed version, with $N$ replaced by $N_c$. For the fusion center, the complexity is $\CO(I)$. 

\begin{algorithm}[t]
\KwIn{$\x_*$, $(\x_n)_{n=1:N}$, $\y$, $\left(q^{(n)}_k\right)_{n=1:N,k=0:K}$, $(\pi_k)_{k=1:K}$}
\KwOut{Estimator $\widehat{h}_{\text{DEB-MMSE}}(\y)$}
\nl \For{each cluster head $i$ in parallel}{
\nl Compute $p(f_*|\BIy^{(i)},\hat{\Bpsi}^{(i)})$ via either CEM or ICM. \\
\nl Compute $\widehat{h}_{\text{EB-MMSE}}^{(i)}(\y^{(i)}),\VAR\left[f_*|\BIy^{(i)},\hat{\Bpsi}^{(i)}\right]$ and transmit to the fusion center. \\
}
\nl The fusion center computes the optimal coefficients $(c_i^\star)_{1:I}$ by (\ref{eqn:debcoef}). \\
\nl The fusion center computes $\widehat{h}_{\text{DEB-MMSE}}(\y)=\sum_{i=1}^Ic^\star_i\widehat{h}_{\text{EB-MMSE}}^{(i)}(\y^{(i)})$. \\
\nl \Return $\widehat{h}_{\text{DEB-MMSE}}(\y)$.
    \caption{{\bf Distributed Empirical Bayes MMSE Estimator (DEB-MMSE)}}
    \label{alg:dseb}
\end{algorithm}

\begin{remark}
\label{rmk:discont}
One downside of $\widehat{h}_{\text{DS-BLUE}}(\y)$ and $\widehat{h}_{\text{DEB-MMSE}}(\y)$ is that the reconstructed spatial field is discontinuous in space. There exists various techniques to avoid discontinuities by smoothing the reconstruction around the discontinuous boundary. However, these are left to future work. 
\end{remark}

\section{Experiments with Synthetic Data}
\label{sec:synexp}
We conduct two experiments with synthetically generated data to study the performance of the methods we proposed including S-BLUE, CEM, and ICM. In Section~\ref{ssec:sensitivity}, we study the sensitivity of the proposed methods to the strength of distortions. In Section~\ref{ssec:realistic}, we perform a realistic simulation and analyse the overall performance of the proposed methods. 
In the studies, the two empirical Bayes-based estimators (CEM and ICM) use the quadratic loss function and hence correspond to $\widehat{h}_{\text{EB-MMSE}}$. 
The proposed methods are compared to two baselines, the ``oracle'' case in which the distortions $\Bpsi$ are known exactly, and the ''naive'' case in which distortions are disregarded in the prediction, i.e.\ $\hat{\Bpsi}_n=\Bpsi^0$, for $n=1,\ldots,N$. 

\subsection{Synthetic Experiment 1: Homogeneous Distortion Characteristics}
\label{ssec:sensitivity}
In this experiment, we study an ideal scenario where the signal-to-noise ratio (SNR) is high and observations are plentiful. We fix the number of observations per sensor to be 50, and the SNR to be 15dB. Notice that for the ease of comparison, all SNRs are measured at the sensor level, that is, the SNRs of aggregated observations. Under the i.i.d. noise assumption, we define $\text{SNR}=10\log_{10}\left(\frac{M\VAR(F)}{\varsigma^2}\right)$, where $\VAR(F)$ denotes the signal variance, $\varsigma^2$ denotes observation noise variance, and $M$ denotes the number of observations per sensor. 

We simulate a spatial field defined on the two-dimensional square $\CX=[0,1]^2$ with mean 10, i.e.\ $\forall \x\in\CX$, $\mu(\x)=10$, 
and a Mat\'ern covariance function with $\nu=3/2$, $\VAR(F)=100$, and length scale=0.3, i.e.\ $\forall \x,\x'\in\CX$,
$\CC(\x,\x')=100\left(1+\frac{\sqrt{3}\|\x-\x'\|}{0.3}\right)\exp\left(-\frac{\sqrt{3}\|\x-\x'\|}{0.3}\right).$ Here $\|\x-\x'\|$ corresponds to the Euclidean distance. The contour plot of the simulated spatial field is shown in Figure~\ref{sfig:sim1sp}. 

Subsequently, 100 sensors are randomly placed in the square. 50 out of the 100 sensors are fixed to have identical distortion parameters, and the rest are set to have the default transformation parameters $\Bpsi^0$, that is, non-distorting. For the sensors with distortions, we first fix the offset parameter $b_n$ at 5, and vary the gain parameter $a_n$ from 1 to 1.6. Then we fix the gain parameter at 1.2, and vary the offset parameter from 0 to 12. With each setting of distortion parameters, 100 sets of noisy observations are randomly simulated. For each set of observations, the three proposed methods: S-BLUE, CEM, and ICM, along with the two baselines oracle and naive, are used to reconstruct the spatial field. Here, weakly informative prior for the distortion parameters is used, which has a single category ($K=1$) given by $q^{(n)}_1=0.5$ for all $n$, where under $\pi_1$, $a_n\sim\log\NORMAL(0.25, 0.1^2)$, $b_n\sim\NORMAL(6, 3^2)$. 
The reconstruction accuracy is evaluated by the mean-squared-errors (MSE) at a $100\times 100$ grid on $[0,1]^2$. 

Figure~\ref{fig:sim1} shows the reconstruction accuracy averaged over 100 realizations. For better interpretability, the ratio between the MSE and the prior variance, referred to as the relative MSE, is shown. Error bars in Figure~\ref{fig:sim1} indicate the 95\% Student's $t$-confidence interval of the relative MSE estimated from the 100 realizations. Error bars in all subsequent figures indicate the 95\% Student's $t$-confidence interval of the respective underlying quantity. From Figure~\ref{fig:sim1}, one observes that the MSE of the oracle stayed constant, as expected, while the MSE of the naive baseline increased rapidly when the distortion parameters increased. 
The MSE of S-BLUE first decreased and then increased slightly. This is due to the way the prior distribution of distortions were set up. The prior mean of $a_n$ and $b_n$ were 1.29 and 6.0, respectively. Since S-BLUE makes predictions based purely on the prior information, its performance is best when the actual distortion is closest to the prior mean. 
The two empirical Bayes-based methods showed decreasing MSE when the gain parameter increased and slightly increasing MSE when the offset parameter increased. The reason is that since the gain parameter affects both the location and the spread of the observations, while the offset parameter only affects the location, the gain in the distortion was more noticeable and thus easier to detect. The MSEs of CEM and ICM were almost identical. Figure~\ref{fig:sim1r} shows the average false positive rate (FPR) and false negative rate (FNR) of CEM and ICM. The FPR is defined as the proportion of non-distorting sensors that were estimated to be distorting, and the FNR is defined as the proportion of distorting sensors that were estimated to be non-distorting. The FPR and FNR of the two methods were almost identical. Notice in addition that the FNR was high when the gain was 1 and the offset was 5. This was caused by the short length scale (0.3), which made detection of the offset hard due to the low spatial correlation. Finally, all error bars are narrow, indicating that difference between the performance of different methods are statistically significant. To further confirm this, we show the maximum absolute deviation from the relative MSE of the five methods in the second column of Table~\ref{tab:simmaxdev}. One checks that the deviations are small, indicating that the performance is stable across realizations. 

\begin{figure}[t]
\centering

\begin{subfigure}[b]{0.48\linewidth}
\includegraphics[width=\linewidth]{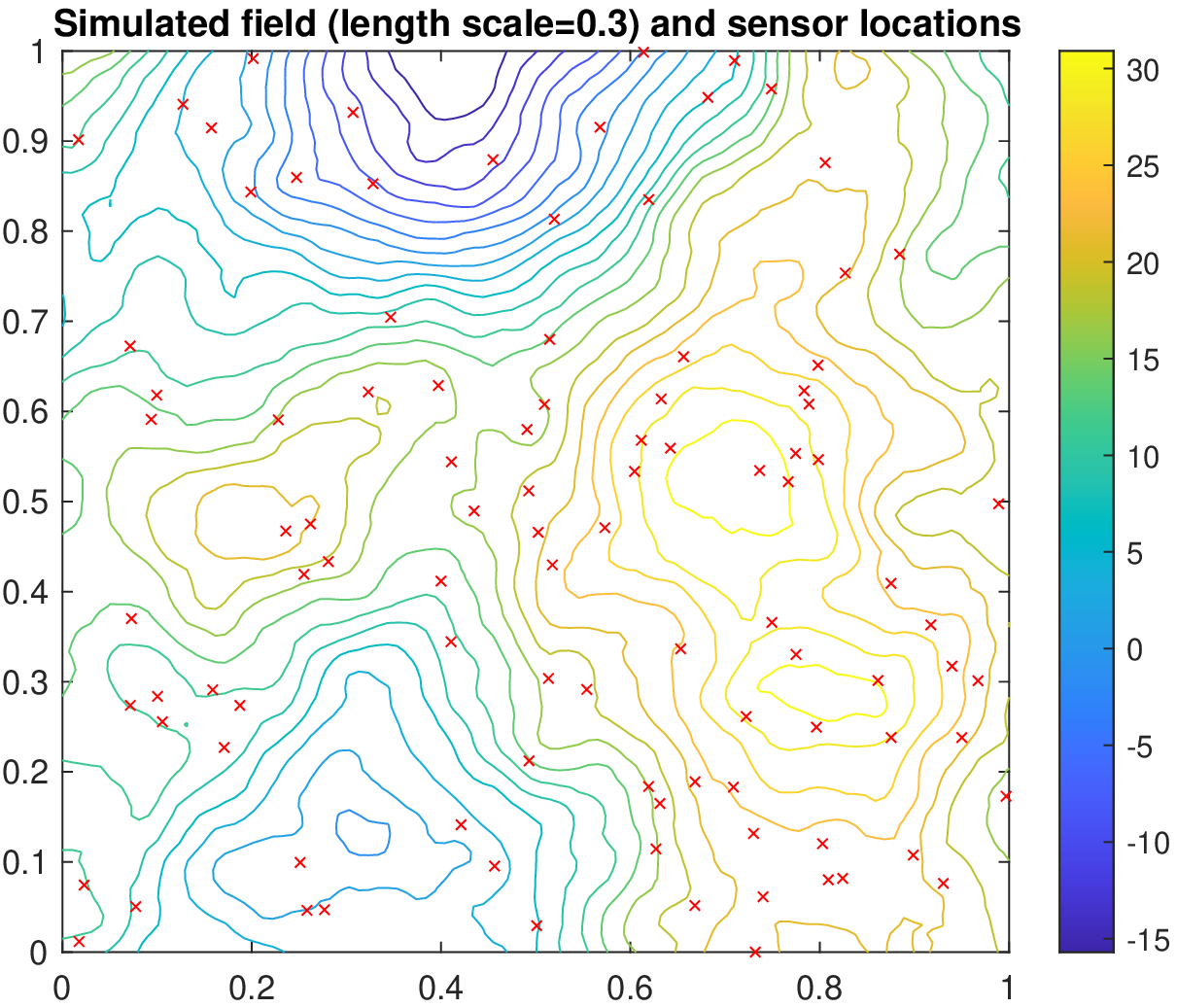}
\caption{Experiment 1}
\label{sfig:sim1sp}
\end{subfigure}
~
\begin{subfigure}[b]{0.48\linewidth}
\includegraphics[width=\linewidth]{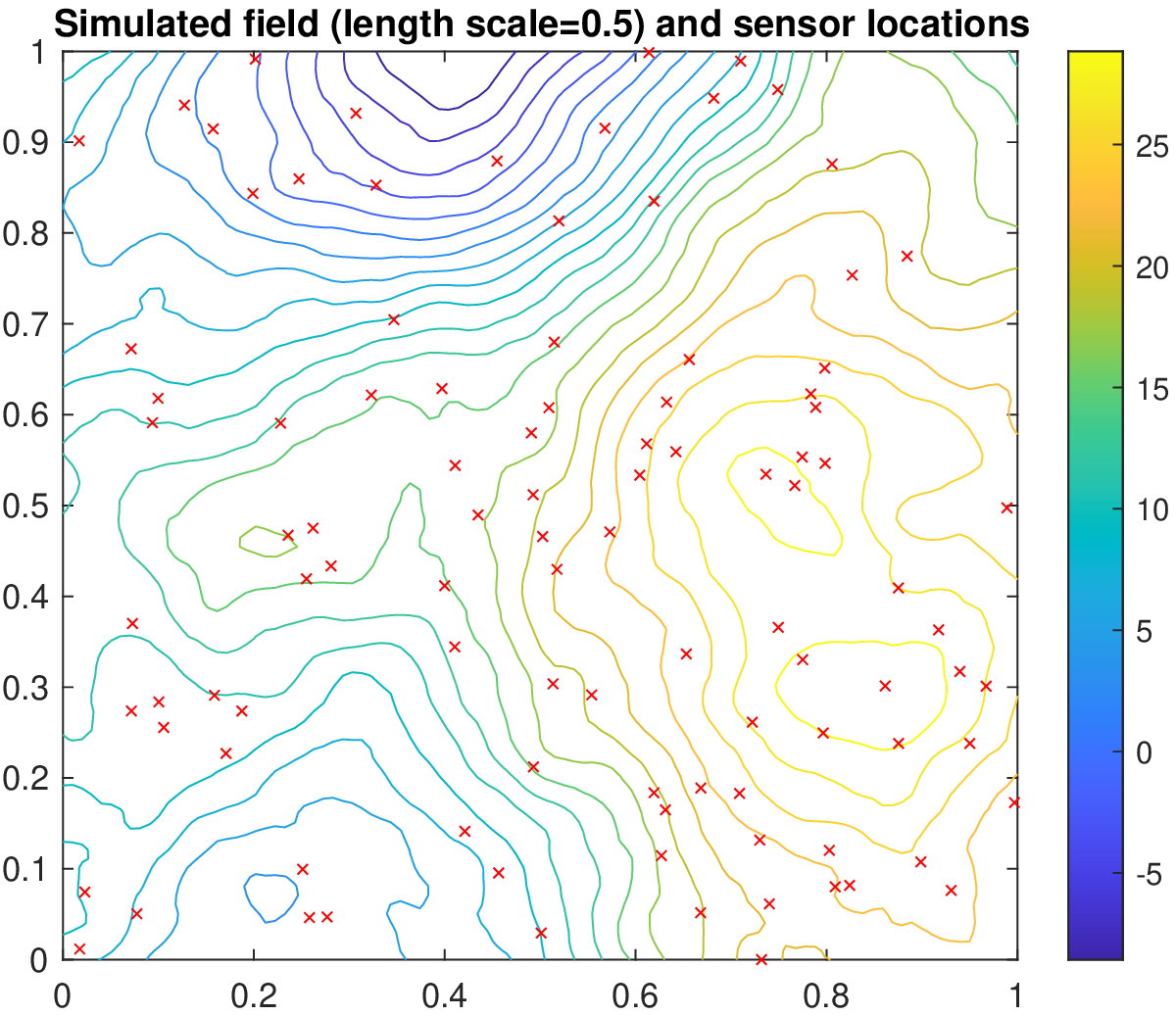}
\caption{Experiment 2}
\label{sfig:sim2sp}
\end{subfigure}
\caption{Contour plots of simulated spatial fields used in the two synthetic experiments with sensor locations.}
\label{fig:simsp}
\end{figure} 

\begin{figure}[t]
\centering
\includegraphics[width=\linewidth]{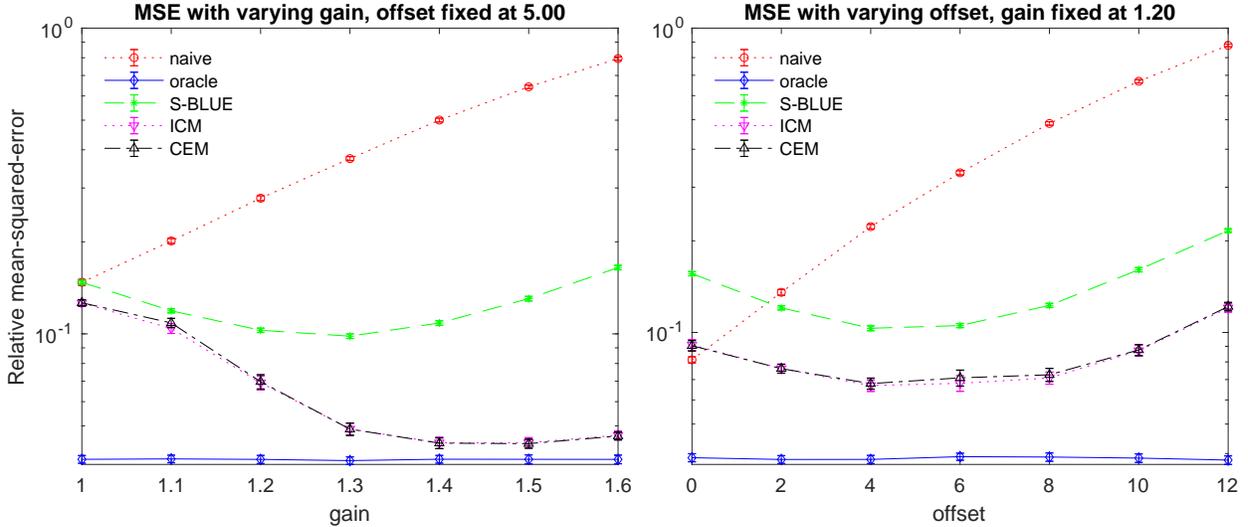}
\caption{Synthetic experiment 1 -- relative MSE (log-scale) with error bars indicating the 95\% confidence interval against varying strengths of distortion.}
\label{fig:sim1}
\end{figure}

\begin{figure}[t]
\centering
\includegraphics[width=\linewidth]{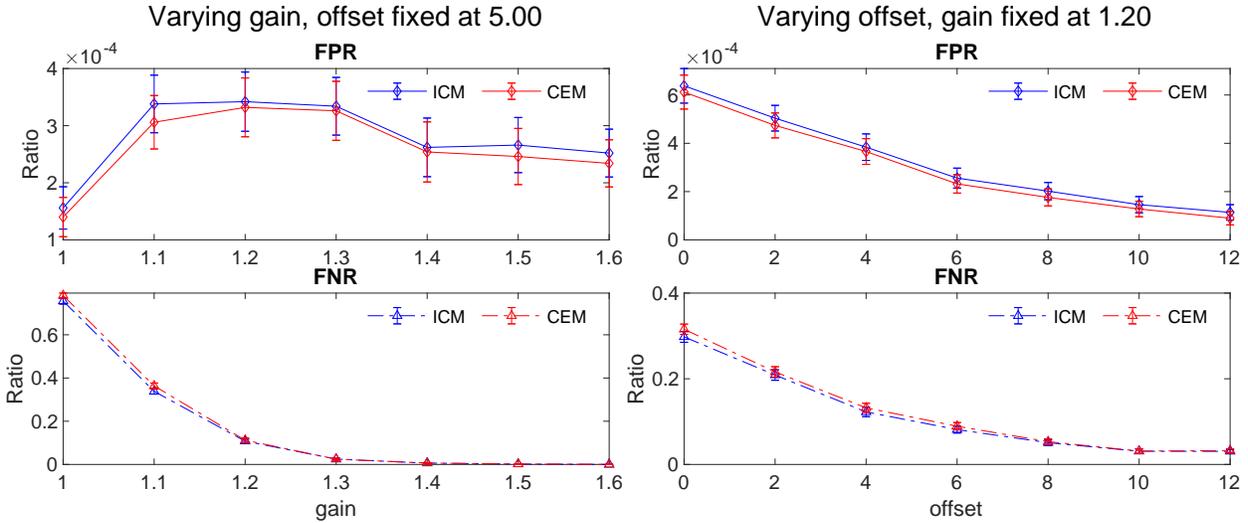}
\caption{Synthetic experiment 1 -- FPR and FNR of CEM and ICM with error bars indicating the 95\% confidence interval against varying strengths of distortion.}
\label{fig:sim1r}
\end{figure}

\subsection{Synthetic Experiment 2: Inhomogeneous Distortion Characteristics}
\label{ssec:realistic}
In the second synthetic experiment, we study a realistic scenario where each sensor has different distortion parameters, and vary the SNR as well as the number of observations. 

We again simulate a spatial field defined on the two-dimensional square $[0,1]^2$. This time, however, the length scale is set to be 0.5, and the spatial correlation decays at a slower rate. The contour plot of the simulated spatial field is shown in Figure~\ref{sfig:sim2sp}. The 100 sensors are placed at the same locations as in the synthetic experiment~1. 50 out of the 100 sensors are randomly selected to have the different distortion parameters generated from the following prior with three categories ($K=3$), given by $q^{(n)}_1=q^{(n)}_2=q^{(n)}_3=\frac{1}{6}$ for all $n$, where under $\pi_1$, $a_n\sim\log\NORMAL(-0.4, 0.05^2)$, $b_n\sim\NORMAL(0, 0.2^2)$, under $\pi_2$, $a_n\sim\log\NORMAL(0.2, 0.05^2)$, $b_n\sim\NORMAL(0, 0.2^2)$, under $\pi_3$, $a_n\sim\log\NORMAL(0, 0.05^2)$, $b_n\sim\NORMAL(10, 2^2)$,
and the rest of the sensors are set to be non-distorting. 
The distortion parameters are generated and fixed in this experiment. After that, we randomly simulate 100 sets of noisy observations. 
For each set of observations, we test the proposed methods along with the baselines as in the synthetic experiment~1. 

\begin{table}[t]
\centering
\caption{Synthetic experiment 1 \& 2 -- maximum absolute deviation from the average relative MSE of the five methods. }
\label{tab:simmaxdev}
\begin{tabular}{|l|r|r|}
\hline 
Method & Experiment 1 & Experiment 2 \\ 
\hline 
oracle & 0.0278 & 0.1615 \\ 
\hline 
naive & 0.1566 & 0.2064 \\ 
\hline 
S-BLUE & 0.0410 & 0.1800 \\ 
\hline 
ICM & 0.0937 & 0.1935 \\ 
\hline 
CEM & 0.0908 & 0.1928 \\ 
\hline 
\end{tabular} 
\end{table}

Figure~\ref{fig:sim2} shows the relative MSE averaged over 100 realizations, with different number of observations per sensor and different SNR. Figure~\ref{fig:sim2r} shows the FPR and FNR of CEM and ICM averaged over 100 realizations. Observe that the MSE of the baselines and S-BLUE did not change with different number of observations per sensor because they depend only on the mean of observations from each sensor. CEM and ICM, on the other hand, benefited from having access to more observations. The naive baseline showed a peculiar trend that first decreased and then increased when SNR increased. The reason is that the naive estimator is a linear combination of the prior mean and the observations, where the weights depend on the SNR. With high SNR, the naive estimator placed a high weight on the distorted observations, thus making the MSE high. The MSE of S-BLUE decreased steadily when SNR increased, and eventually flattened out. The MSE of CEM and ICM depended highly on the number of observations. CEM had the best reconstruction quality among the proposed methods when observations were plentiful or when the SNR was high. In comparison, ICM performed considerably worse than CEM when the number of observations was 5 and 20. This was due to the higher spatial correlation, which made the dependency between distortion parameters higher in the posterior and reduced the effectiveness of iterative greedy search. Notice that the FPR of CEM and ICM was close to 1 when the SNR was low and the observations were scarce. This indicates that CEM and ICM estimated all of the sensors as distorting. The reason was that with high noise variance, the likelihood had a flat shape, and thus the posterior mode did not contain non-distorting sensors. It can also be observed that the FPR and FNR were low when the number of observations was 100 and the SNR was high. 
Same as in the synthetic experiment~1, the error bars are narrow, indicating the statistical significance of the differences. The third column of Table~\ref{tab:simmaxdev} shows the maximum absolute deviation from the relative MSE in the synthetic experiment~2. The deviations are larger compared to the synthetic experiment~1 due to the case with small number of samples and low SNR. Nonetheless, this shows that the performance of the methods is stable across realizations. 
To further demonstrate the performance of the proposed methods under different distortion parameters, we repeat the above experiment. This time, instead of fixing the distortion parameters across realizations, the distortion parameters are independently randomly generated in each of the 100 realizations. The result is shown in Figure~\ref{fig:sim2rand}. Since Figure~\ref{fig:sim2} and Figure~\ref{fig:sim2rand} look very similar, we confirm that the performance is stable across a range of distortion parameters. 

Finally, to show the effect of the proportion of distorting sensors, an additional experiment is performed. Figure~\ref{fig:sim2ar} shows the relative MSE of S-BLUE, CEM, ICM and the baselines when the number of observations is fixed at 100, the SNR is fixed at 20dB, and the proportion of distorting sensors varies from 0 to 1. These results were averaged across 100 independent realizations of distortion parameters and random noises. From Figure~\ref{fig:sim2ar}, it can be observed that when no sensor was distorting, the relative MSE of all methods coincided. However, as the proportion of distorting sensors increased, the relative MSE of the naive baseline increased rapidly, the relative MSE of S-BLUE increased slowly, the relative MSE of CEM and ICM increased only slightly and flattened out eventually, and the oracle, as expected, was unaffected by the change of proportion. The reason is that increasing the proportion of distorting sensors resulted in an increase in prior model uncertainty that negatively affected S-BLUE, while CEM and ICM used the information from the observations to estimate the distortion parameters and thus were only slightly affected. 

\begin{figure}[t]
\centering
\includegraphics[width=\linewidth]{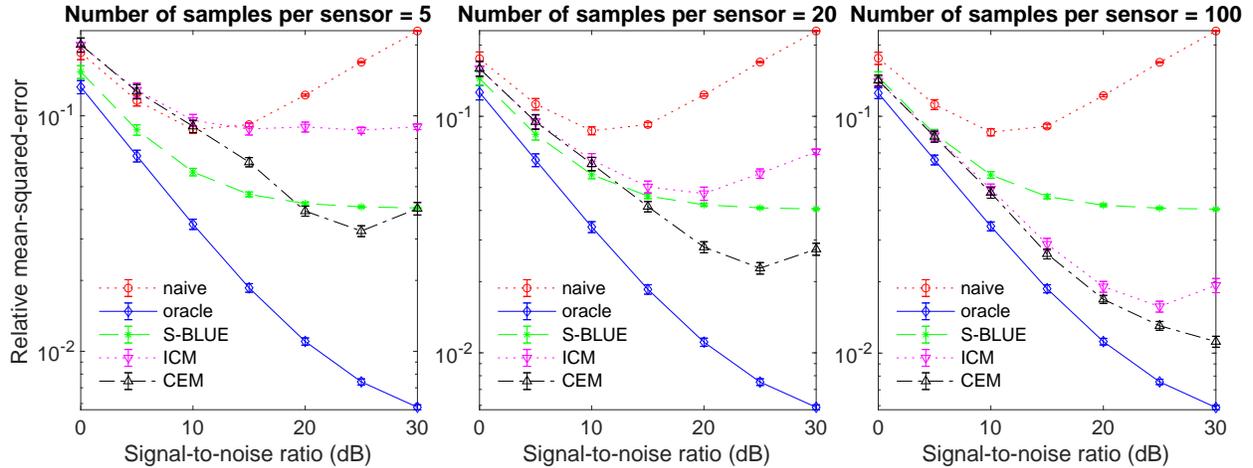}
\caption{Synthetic experiment 2 -- relative MSE against varying number of observations and SNR.}
\label{fig:sim2}
\end{figure}

\begin{figure}[t]
\centering
\includegraphics[width=\linewidth]{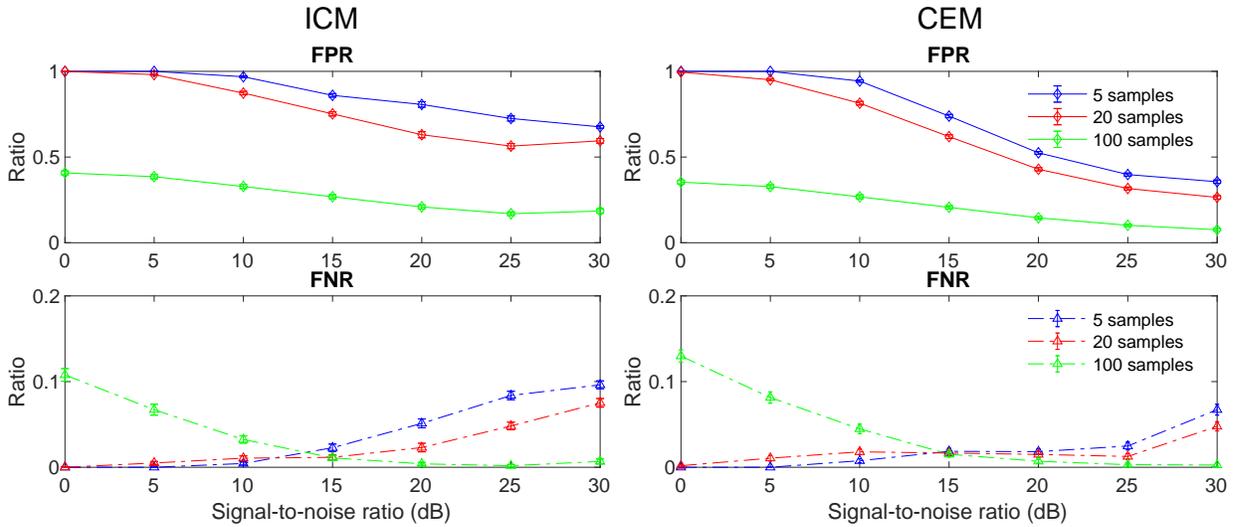}
\caption{Synthetic experiment 2 -- FPR, FNR of CEM and ICM against varying number of observations and SNR.}
\label{fig:sim2r}
\end{figure}

\begin{figure}[t]
\centering
\includegraphics[width=\linewidth]{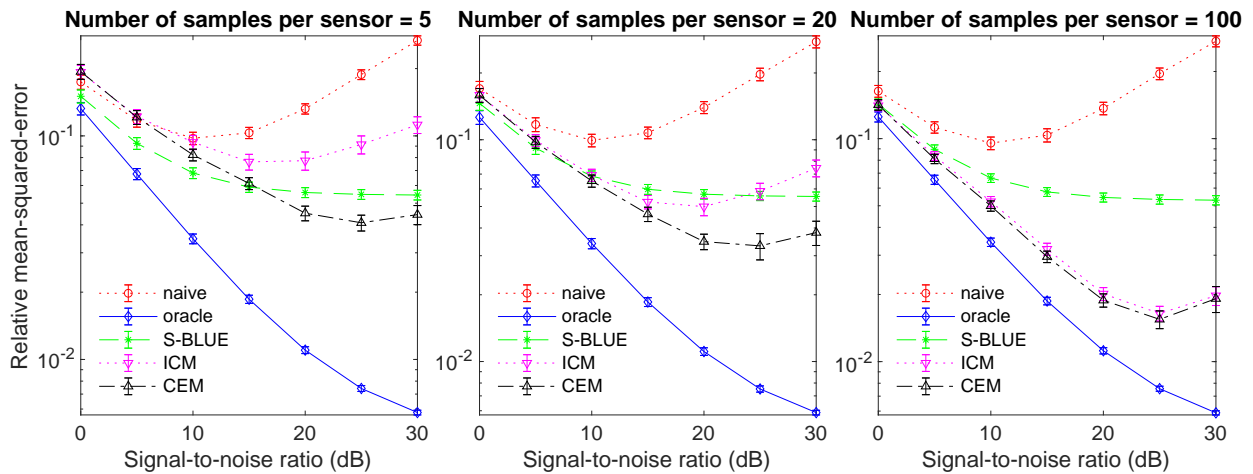}
\caption{Synthetic experiment 2 -- relative MSE against varying number of observations. Distortion parameters were randomly generated in each of the 100 realizations. }
\label{fig:sim2rand}
\end{figure}

\begin{figure}[t]
\centering
\includegraphics[width=0.5\linewidth]{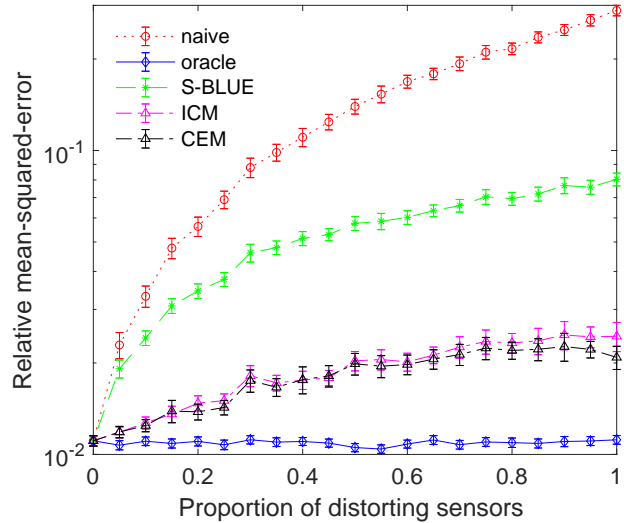}
\caption{Synthetic experiment 2 -- relative MSE against varying proportion of distorting sensors. The number of observations is fixed at 100 and the SNR is fixed at 20dB. }
\label{fig:sim2ar}
\end{figure}

\section{Experiments with Real Data}
\label{sec:realexp}
We study the 2017 US temperature dataset from the US EPA\footnote{\url{https://aqs.epa.gov/aqsweb/airdata/download_files.html}, retrieved on 5 June 2018.}. The dataset contains 309,226 rows, with the following fields:
\begin{itemize}[leftmargin=*]
\item \texttt{State.Code}: the numerical code of the state.
\item \texttt{County.Code}: the numerical code of the county.
\item \texttt{Site.Num}: the numerical code of the monitoring site.
\item \texttt{Longitude}: the longitude of the site.
\item \texttt{Latitude}: the latitude of the site. 
\item \texttt{Date.Local}: the local date on which the temperature measurement was taken.
\item \texttt{X1st.Max.Value}: the maximum hourly temperature measurement of a day.
\end{itemize}

\subsection{Preprocessing}
The first step of preprocessing is to remove the irrelevant fields from the dataset. For temperature measurements, we take the maximum hourly measurement of every day. The measurements are converted from Fahrenheit into Celsius. We noticed that there is an obvious outlier which corresponds to 125$\degree$C (the next highest measurement is 55$\degree$C) with state code 6, county code 79, site number 5 on day~161. 
We also noticed another potential outlier corresponding to $-17.89\degree$C ($0\degree$F) in June, while the next lowest temperature from May to July is 1.1$\degree$C. This measurement has state code 38, county code 93, site number 101 on day 181. Therefore, these two measurements are removed from the dataset. 

In the dataset, there are 830 monitoring sites in total. Figure~\ref{fig:usmap} is a scatter plot showing the spatial locations of these monitoring sites. 
Not all monitoring sites have taken measurements on all 365 days. We refer to these as missing measurements. Out of the 830 monitoring sites, 55 contain more than 30\% of missing measurements. Overall, 16,980 measurements are missing, which is 5.6\% of the 302,950 measurement (365$\times$830). 
\begin{remark}
We have noticed that some monitoring sites have reported temperature measurements by different types of instrument, which explains why we have slightly fewer measurements after we sort measurements into the site \& date format. However, measurements on the same day at the same monitoring site with different instruments tend to be close hence we choose an arbitrary set whenever a monitoring site reports multiple sets of measurements. 
\end{remark}
\begin{figure}[t]
\centering
\includegraphics[width=0.7\linewidth]{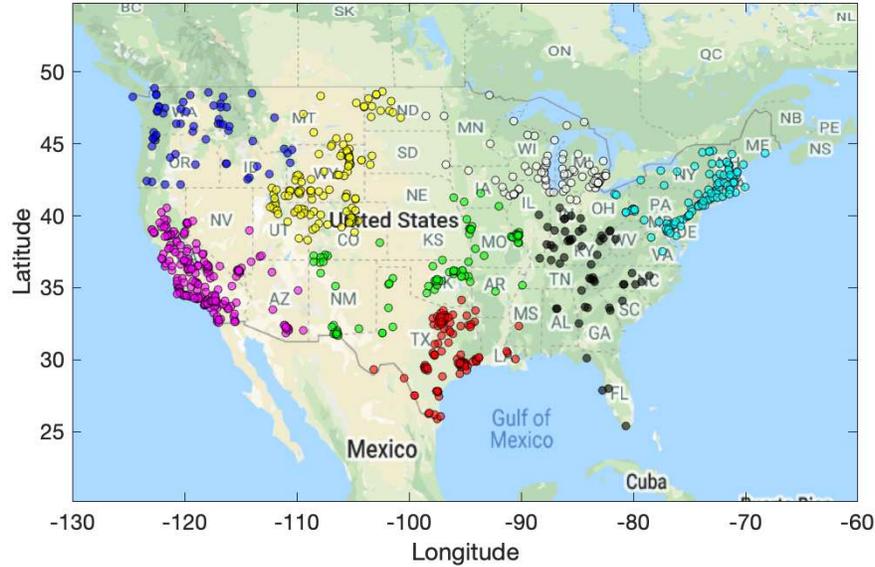}
\caption{Spatial locations of temperature monitoring sites. Colors of the points indicate the 8 spatial clusters used to evaluate the distributed version of the proposed methods. }
\label{fig:usmap}
\end{figure}

\begin{figure}[t]
\centering

\begin{subfigure}[b]{0.48\linewidth}
\includegraphics[width=\linewidth]{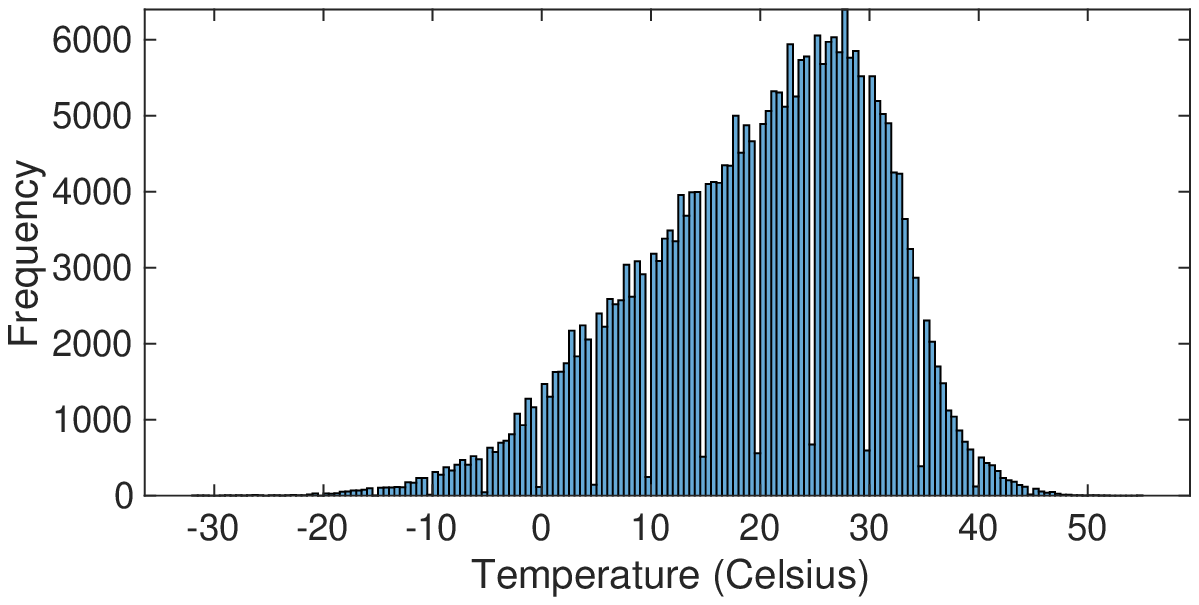}
\caption{histogram of raw data}
\label{sfig:ustemphist1}
\end{subfigure}
~
\begin{subfigure}[b]{0.48\linewidth}
\includegraphics[width=\linewidth]{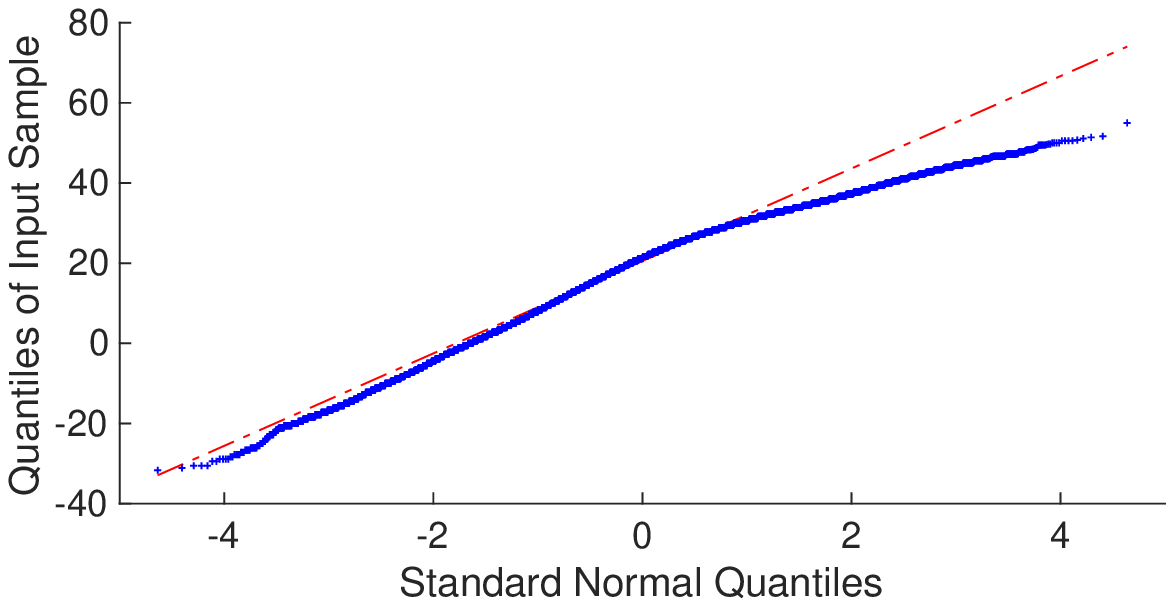}
\caption{normal Q-Q plot}
\label{sfig:ustempqq}
\end{subfigure}
\caption{Histogram and normal Q-Q plot of the temperature measurements.}
\label{fig:ustempdesc}
\end{figure}

Figure~\ref{fig:ustempdesc} shows the histogram and the normal quantile-quantile (Q-Q) plot of the measurements. From Figure~\ref{sfig:ustemphist1}, one can see that the measurements contain quantization artifacts. Nonetheless, the measurements are treated as continuous. From Figure~\ref{sfig:ustempqq}, one sees that the dataset does not contain abnormally large or small values, and is slightly left-skewed compared to a normal distribution. Overall, the normality assumption holds approximately, as we have not yet considered the seasonal shift of temperature. 
Table~\ref{tab:ustempstats} shows the summary statistics of these measurements. 
\begin{table}[t]
\caption{Summary statistics of the temperature measurements.}
\begin{center}
\begin{tabular}{|r|r|r|r|r|}
\hline
mean & variance & skewness & kurtosis & \\
\hline
19.7930 & 119.9365 & -0.5429 & 2.9377 &\\
\hline
min & 1st-quartile & median & 3rd-quartile & max\\
\hline
-31.6667 & 12.7778 & 21.1111 & 28.3333 & 55.0000\\
\hline
\end{tabular}
\end{center}
\label{tab:ustempstats}
\end{table}

\subsection{Smoothing of Spatial Field}
The first step of the spatial analysis is to model the daily observations as noisy samples from a Gaussian process without distortions and estimate its hyperparameters. For subsequent analyses, we take only data from the 20 days between day 181 to day 200 from the dataset, and restrict ourselves to the contiguous United States (that is, excluding Hawaii and Alaska) since Hawaii and Alaska are far away from the rest of the Unites States. As for the covariance function, we again choose to use the Mat\'ern covariance function with $\nu=3/2$ to allow for high flexibility while keeping the spatial field mean-square differentiable. For each calendar day, we estimate the signal mean, signal variance, noise variance and length-scale via maximum marginal likelihood estimation (see Chapter~5 of \cite{rasmussen2005gaussian} for details) using all available observations on that day. Then, the median of the estimated values on 20 days are taken as the estimated hyperparameters of the GP. The estimated hyperparameters are shown in Table~\ref{tab:ustemphyp}.

\begin{table}[t]
\caption{Estimated GP hyperparameters}
\begin{center}
\begin{tabular}{|r|r|}
\hline
signal mean (\degree C) & signal std. dev. (\degree C) \\
\hline
30.5034 & 4.6587 \\
\hline
noise std. dev. (\degree C) & length-scale (km) \\
\hline
1.6340 & 174.3699 \\
\hline
\end{tabular}
\end{center}
\label{tab:ustemphyp}
\end{table}

Using the estimated hyperparameters, we reconstruct the spatial field for each calendar day at both a $100\times100$ grid of locations and the 824 sensor locations by computing the posterior mean and the posterior covariance matrix, conditional on all the available observations on that day. Since the posterior mean is usually much smoother compared to the actual spatial field, a sample spatial field is generated from the posterior distribution to make it realistic. The generated spatial field at a grid of locations is treated as the ground truth of the spatial field, and the generated field intensities at the 824 sensor locations are treated as as the noise-free sensor reading from which noisy and distorted observations are generated. 

\subsection{Experimental Settings}

In this experiment, in addition to the five methods tested in the synthetic experiments, we evaluate the performance of the distributed version of S-BLUE, CEM, and ICM introduced in Section~\ref{sec:distributed}. To do so, we first divide the 824 sensor locations into 8 disjoint clusters via hierarchical clustering with great-circle distance and complete linkage (see e.g.\ Section~14.3.12 of \cite{friedman2001elements}). The 8 resulting clusters are indicated in Figure~\ref{fig:usmap} using 8 different colors. Subsequently, we apply the distributed approaches to this partitioned sensor network. 

In order to evaluate the reconstruction accuracy of the proposed methods, $25\%$ of the sensor locations are left out as the test set. For the 618 remaining locations, we randomly generate the distortion parameters from three different settings. 
In the three settings, each sensor has a respective probability of 0.3, 0.5 and 0.7 to introduce distortion.
Under all three settings, the distorting sensors have one of the three following categories, with equal probabilities. Under $\pi_1$, $a\sim\log\NORMAL(-0.4, 0.05^2)$, $b\sim\NORMAL(0, 0.2^2)$, under $\pi_2$, $a\sim\log\NORMAL(0.2, 0.05^2)$, $b\sim\NORMAL(0, 0.2^2)$, under $\pi_3$, $a\sim\log\NORMAL(0, 0.05^2)$, $b\sim\NORMAL(10, 2^2)$.
For each of the three settings, we first randomly generate the distortion parameters for each sensor. Then we randomly simulate noisy observations at each sensor location and subsequently apply the corresponding distortions. In addition, we examine the effect of the number of observations and effective SNR. Specifically, we examine the four following cases:
\begin{enumerate}
\item 10 observations per sensor, SNR=5dB;
\item 10 observations per sensor, SNR=15dB;
\item 50 observations per sensor, SNR=5dB;
\item 50 observations per sensor, SNR=15dB.
\end{enumerate}

\subsection{Results and Discussion}
Figure~\ref{fig:ustemp} shows the relative MSE averaged over 20 days, and Figure~\ref{fig:ustempfprfnr} shows the FPR and FNR of CEM and ICM, under each setting. 
Similar to the synthetic experiment~2, we observe that the MSE of the baselines and S-BLUE did not change with different number of observations per sensor. CEM and ICM, on the other hand, benefited from having access to more observations.

The naive baseline showed worse MSE when SNR was high. The reason was the same as in the synthetic experiment~2. In addition, the MSE of naive baseline is clearly affected by a higher proportion of distorting sensors.  

S-BLUE showed stable accuracies across all settings. Since S-BLUE does not estimate the distortion parameters, the error mainly resulted from smoothing. This can be seen clearly from the reconstructed spatial fields in Figure~\ref{fig:ustemprec}, which will be discussed later.

Compared to the simple method S-BLUE, the more sophisticated methods CEM and ICM have the additional benefit of being able to estimate the distortion parameters. 
CEM consistently outperformed ICM, which is consistent with what we observed in Section~\ref{ssec:realistic}. This possibly indicates the ineffectiveness of the iterative greedy search method with high-dimensional mixed discrete-continuous optimization problems. While CEM benefited slightly from higher SNR, the performance of ICM deteriorated with higher SNR when there were 10 observations per sensor, though they both benefited from more plentiful observations. Looking at Figure~\ref{fig:ustempfprfnr}, it can be observed that ICM had significantly higher FPR and slightly lower FNR compared to CEM. More observations and higher SNR have the effect of decreasing the FPR of both CEM and ICM. 

\begin{remark}
We noticed that if the SNR is set to be higher than 20dB, the performance of both CEM and ICM deteriorates greatly. With a high SNR (hence low noise variance), the posterior density of the model tends to be highly irregular. Thus the optimization problem might be highly ill-posed. In general, there is no universal solution to this. Some form of relaxation might help to improve the performance, but that is not examined in this work. 
\end{remark}

In the low SNR settings, the distributed approaches performed slightly worse compared to their centralized counterparts. In the high SNR settings, the distributed S-BLUE performed slightly worse compared to the centralized S-BLUE. However, the distributed CEM and ICM performed better compared to their centralized version. This indicates that both CEM and ICM suffered from convergence issues when dealing with the high-dimensional optimization problem. In the distributed version, the problem is decomposed into sub-problems of lower dimensionality, which are easier to solve. This experiment shows that the distributed approaches we proposed are the ideal candidates for solving such spatial field reconstruction problem in large-scale settings. 

Table~\ref{tab:realmaxdev} shows the maximum absolute deviation from the relative MSE in this experiment. It shows that S-BLUE and the distributed CEM had stable performance throughout the 20 days. ICM and the centralized CEM are less stable, presumably due to the difficulty in the optimization procedures. 

\begin{figure}[p]
\centering
\includegraphics[width=\linewidth]{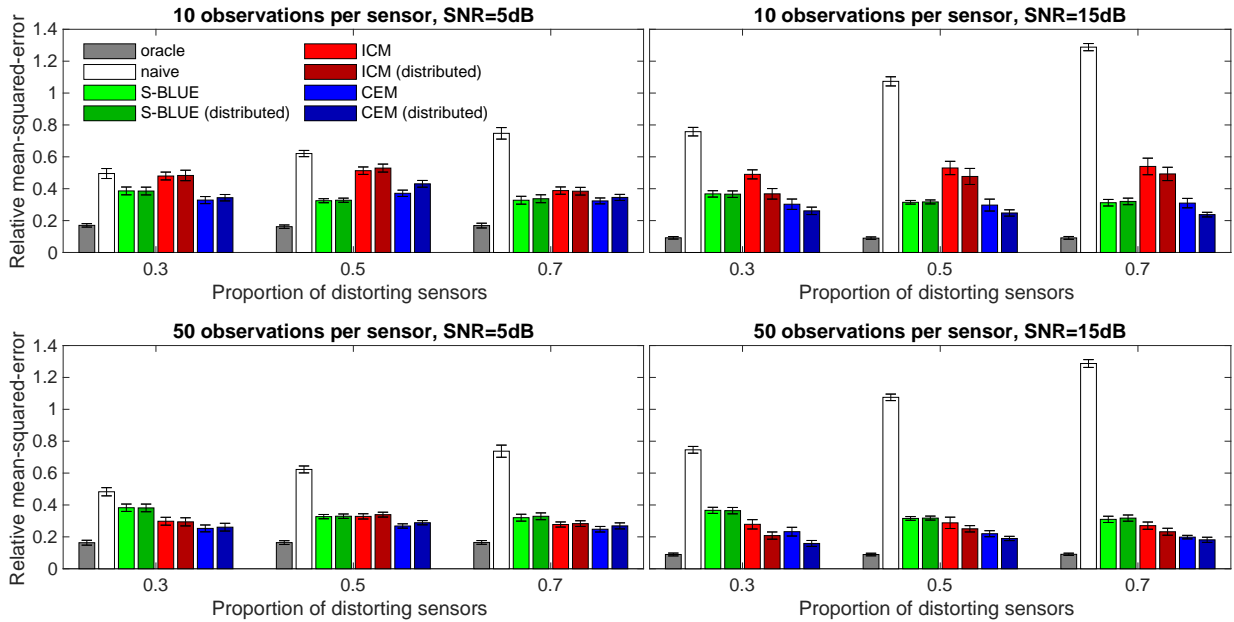}
\caption{Experiment with real data -- relative MSE against varying proportion of distorting sensors, number of observations and SNR.}
\label{fig:ustemp}
\end{figure}

\begin{figure}[p]
\centering
\includegraphics[width=\linewidth]{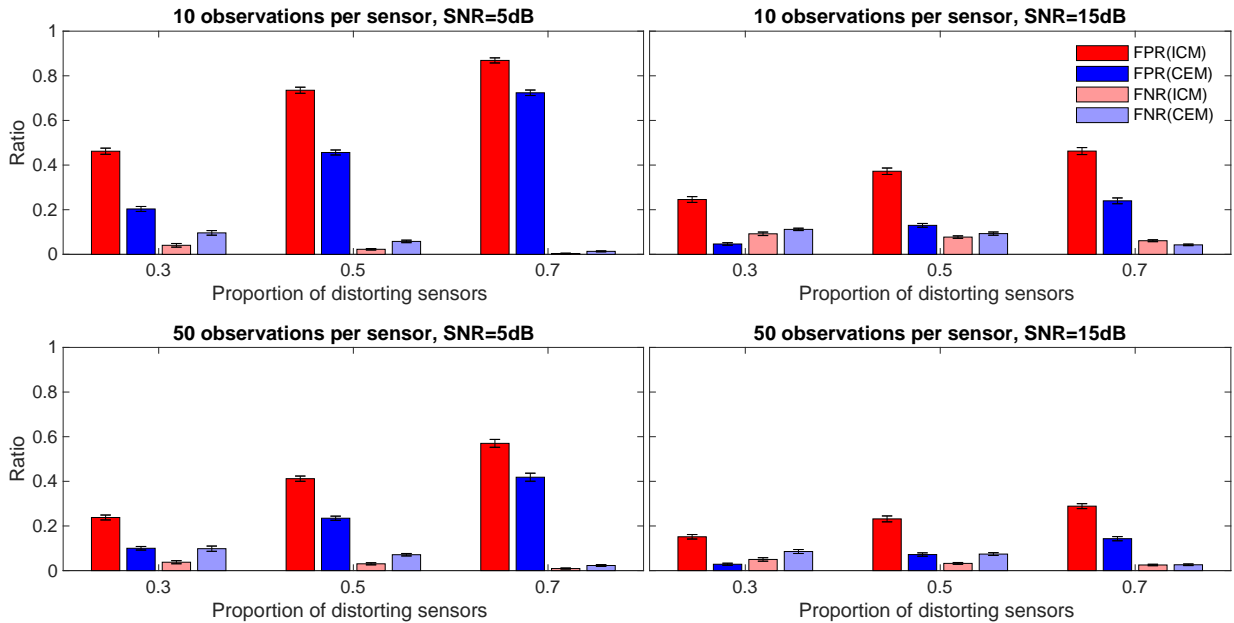}
\caption{Experiment with real data -- FPR, FNR of CEM and ICM.}
\label{fig:ustempfprfnr}
\end{figure}

\begin{table}[h]
\centering
\caption{Experiment with real data -- maximum absolute deviation from the average relative MSE of the eight methods. }
\label{tab:realmaxdev}
\begin{tabular}{|l|r|r|}
\hline 
Method & Centralized & Distributed \\ 
\hline 
oracle & 0.0728 & \multicolumn{1}{c|}{-} \\ 
\hline 
naive & 0.2015 & \multicolumn{1}{c|}{-} \\ 
\hline 
S-BLUE & 0.1132 & 0.1128 \\ 
\hline 
ICM & 0.2676 & 0.3188 \\ 
\hline 
CEM & 0.1838 & 0.1104 \\ 
\hline 
\end{tabular} 
\end{table}

Finally, let us examine the reconstructed spatial fields. Figure~\ref{fig:ustemprec} shows the heat maps of spatial fields reconstructed by centralized and distributed versions of S-BLUE, CEM, ICM, and the two baselines with the settings: proportion of distorting sensors is 0.7, 10 observations per sensor, SNR=15dB. The ground truth spatial field is also included for reference. First, notice that the ground truth contained much more details compared to the reconstructions, because all of the estimators here have the smoothing effect. Overall, the reconstruction of S-BLUE is much smoother and contains fewer details. This is due to nature of S-BLUE, as it does not estimate the distortion parameters. CEM and ICM, on the other hand, preserved many of the details, and their reconstructions were overall close to the one produced by the oracle. The naive method, however, produced a noticeably inaccurate reconstruction. 
As previously mentioned, the reconstructions of the distributed approaches are discontinuous and the discontinuity can be observed at the boundary of the clusters. Nonetheless, they produced accurate reconstructions of the spatial field.

Notice that the above analysis used only 20 consecutive days of data. When a longer time period is considered, the seasonal variation of the underlying spatial field must be taken into consideration. To demonstrate this, we use the data from each of the twelve months to estimate the GP hyperparameters by the same approach as above. Subsequently, the spatial field on the last day of each month is reconstructed by centralized CEM, using noisy and distorted sensor readings at the 618 locations generated in the same way as in the above experiment. These reconstructions along with the ground truths are shown in Figure~\ref{fig:ustempmonth}. It can be seen that there is an obvious seasonal effect on the temperature throughout the year. Notice that the reconstructions closely resemble the ground truths, which indicates that CEM worked well when the characteristics of the underlying spatial field varied throughout the year. 

\begin{figure}[t]
\centering
\includegraphics[width=\linewidth]{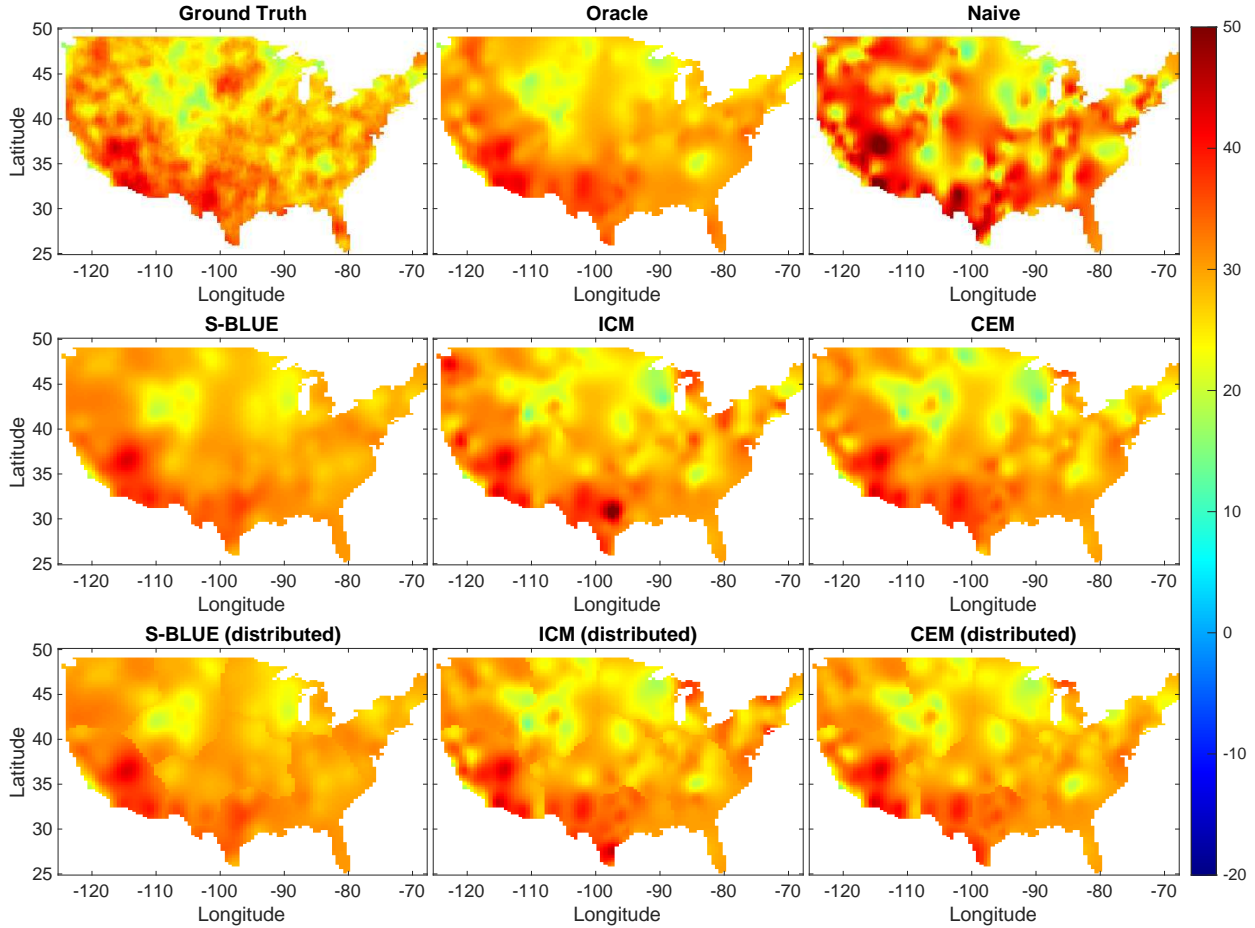}
\caption{Experiment with real data -- Reconstructed spatial fields on day 181. Settings: proportion of distorting sensors is 0.7, 10 observations per sensor, SNR=15dB.}
\label{fig:ustemprec}
\end{figure}

\begin{figure}[t]
\centering
\includegraphics[width=\linewidth]{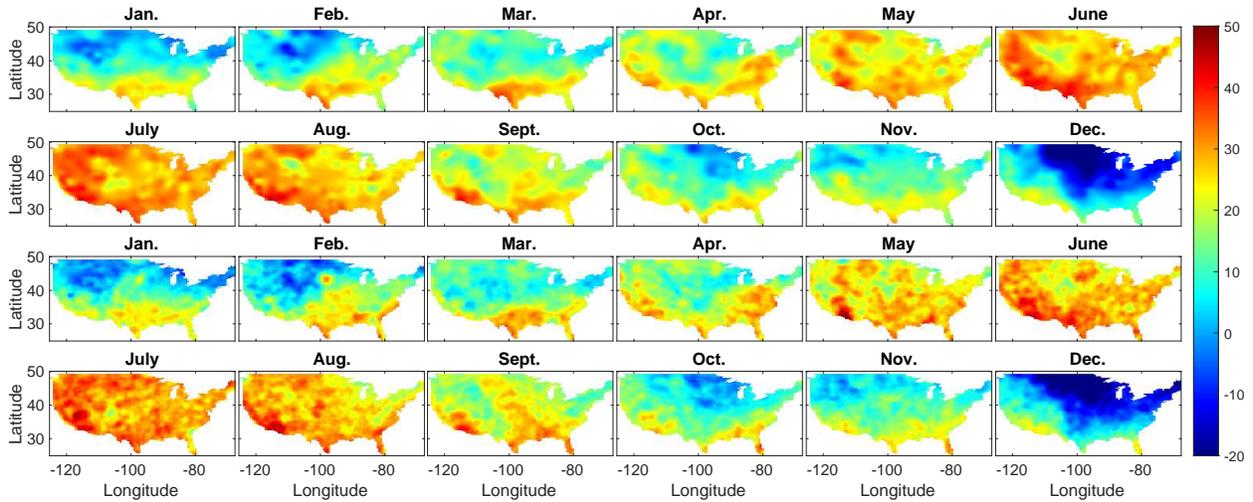}
\caption{Experiment with real data -- Reconstructed spatial fields on the last day of each month. The two top rows show the reconstructions by CEM, and the two bottom rows show the ground truths. Settings: proportion of distorting sensors is 0.7, 10 observations per sensor, SNR=15dB.}
\label{fig:ustempmonth}
\end{figure}

\section{Conclusion}
\label{sec:conclusion}
This paper addressed the problem of spatial field reconstruction based on distorted sensor readings. A new spatial field model based on a mixture of Gaussian process experts was developed. We developed two approaches to solve the inference problem. The first approach uses a linear Bayes estimator named the Spatial Best Linear Unbiased Estimator (S-BLUE), which is a low-complexity algorithm relying only on prior information. The second approach is a two-stage algorithm based on empirical Bayes, in which the unknown distortion parameters of the sensors are estimated based on distorted observations. We developed two optimization procedures for the two-stage algorithm, the first one is based on the Cross-Entropy method (CEM) and the second one is based on the Iterated Conditional Mode (ICM) which is an iterative greedy search procedure. 
In addition, the distributed versions of S-BLUE and empirical Bayes estimators were developed to improve the computational efficiency in large-scale applications.
We preformed two synthetic experiments as well as an experiment based on real temperature data from US EPA with synthetically generated distortions to assess the spatial field reconstruction accuracy of the proposed approaches. The results showed significant improvement compared to the estimation approach that neglects sensor distortions. 

\section*{Acknowledgments}
The research was conducted under the Cooling Singapore project, funded by Singapore’s National Research Foundation (NRF) under its Virtual Singapore programme. 

\bibliographystyle{IEEEtran}
\bibliography{references}

\appendices
\section{Proof of Theorems}

\subsection{Proof of Theorem~\ref{theorem:posterior}}
\label{apx:proofposterior}
\begin{proof}
For $n=1,\ldots,N$, the log model likelihood given $\BIf:=(f(\x_n))_{n=1:N}$ is given by
\begin{align*}
\begin{split}
&\log p(y_{n,1:M_n}|\BIf,\Bpsi)\\
=&-\frac{M_n}{2}\log2\pi-\frac{M_n}{2}\log a_n^2\varsigma^2-\frac{s_n}{2a_n^2\varsigma^2}\\
&+\frac{(a_nf_n+b_n)g_n}{a_n^2\varsigma^2}-\frac{M_n(a_nf_n+b_n)^2}{2a_n^2\varsigma^2},
\end{split}
\end{align*}
which depends on $y_{n,1:M_n}$ only through the statistics $g_n$ and $s_n$. Therefore, $(\BIg,\BIs)$ are sufficient for $(\BIf,\Bpsi)$. The joint density of $(\y,\BIf)$ conditional on $\Bpsi$ is given by
\begin{align}
\begin{split}
&\log p(\y,\BIf|\Bpsi)\\
=&\log p(\y|\BIf)+\log p(\BIf|\Bpsi)\\
=&-\frac{1}{2}\Big[\left(N+\tr(\BM)\right)\log2\pi+\tr(\BM\log(\varsigma^2\BA^2))+\log|\BCC|\\
&+\varsigma^{-2}\vecone^T\BA^{-2}\BIs+\Bmu^T\BCC^{-1}\Bmu+\varsigma^{-2}\BIb^T\BM\BA^{-2}\BIb-\Bgamma^T\BZ^{-1}\Bgamma\\
&-2\varsigma^{-2}\BIg^T\BA^{-2}\BIb+\left(\BIf-\BZ^{-1}\Bgamma\right)^T\BZ\left(\BIf-\BZ^{-1}\Bgamma\right)\Big],
\end{split}
\label{eqn:jointconditional}
\end{align}
where $\Bgamma=\varsigma^{-2}\BM\tilde{\BIg}+\BCC^{-1}\Bmu,\BZ=\varsigma^{-2}\BM+\BCC^{-1}$.
Integrating over $\BIf$, we deduce that,
\begin{align}
\begin{split}
\log p(\y|\Bpsi)=&\log\left[\int p(\y,\BIf|\Bpsi)\DIFF\BIf\right]\\
=&-\frac{1}{2}\left[\tr(\BM)\log2\pi+\tr(\BM\log(\varsigma^2\BA^2))\right.\\
&\left.+\log|\BCC|+\log|\BZ|+\varsigma^{-2}\vecone^T\BA^{-2}\BIs+\Bmu^T\BCC^{-1}\Bmu\right.\\
&\left.+\varsigma^{-2}\BIb^T\BM\BA^{-2}\BIb-2\varsigma^{-2}\BIg^T\BA^{-2}\BIb-\Bgamma^T\BZ^{-1}\Bgamma\right].
\end{split}
\label{eqn:logmlikstep1}
\end{align}
We have by Woodbury's formula that,
\begin{align}
\BZ^{-1}=&\varsigma^2\BM^{-1}-\varsigma^4\BM^{-1}(\BCC+\varsigma^2\BM^{-1})^{-1}\BM^{-1}\label{eqn:woodbury1}\\
=&\BCC-\BCC(\BCC+\varsigma^2\BM^{-1})^{-1}\BCC,\label{eqn:woodbury2}\\
\log|\BZ|=&\log|\varsigma^{-2}\BM|-\log|\BCC|+\log|\BCC+\varsigma^2\BM^{-1}|\label{eqn:woodbury3}.
\end{align}
We also have by $\Bgamma=\varsigma^{-2}\BM\tilde{\BIg}+\BCC^{-1}\Bmu$ that,
\begin{align}
\begin{split}
\Bgamma^T\BZ^{-1}\Bgamma=&\varsigma^{-4}\tilde{\BIg}^T\BM\BZ^{-1}\BM\tilde{\BIg}+2\varsigma^{-2}\tilde{\BIg}^T\BM\BZ^{-1}\BCC^{-1}\Bmu\\
&+\Bmu^T\BCC^{-1}\BZ^{-1}\BCC^{-1}\Bmu.
\end{split}
\label{eqn:logmlikstep1.1}
\end{align}
Substituting (\ref{eqn:woodbury1}) into the first term on the right-hand side of~(\ref{eqn:logmlikstep1.1}), using the fact that $\BM\BZ^{-1}\BCC^{-1}=(\BCC\BZ\BM^{-1})^{-1}=(\varsigma^{-2}\BCC+\BM^{-1})^{-1}=\varsigma^{2}\BUpsilon^{-1}$ for the second term of~(\ref{eqn:logmlikstep1.1}), substituting (\ref{eqn:woodbury2}) into the third term of~(\ref{eqn:logmlikstep1.1}), and finally substituting the resulting formula as well as (\ref{eqn:woodbury3}) into (\ref{eqn:logmlikstep1}), one gets the following after simplification (recall that $\tilde{\BIg}:=\BA^{-1}(\BM^{-1}\BIg-\BIb)$, $\BUpsilon:=\BCC+\varsigma^2\BM^{-1}$),
\begin{align*}
\begin{split}
\log p(\y|\Bpsi)=&-\frac{1}{2}\left[\tr(\BM)\log2\pi+\tr(\BM\log(\varsigma^2\BA^2))\right.\\
&-\log|\varsigma^2\BM^{-1}|+\log|\BUpsilon|+\varsigma^{-2}\vecone^T\BA^{-2}\BIs\\
&\left.-\varsigma^{-2}\BIg^T\BM^{-1}\BA^{-2}\BIg+(\tilde{\BIg}-\Bmu)^T\BUpsilon^{-1}(\tilde{\BIg}-\Bmu)\right].
\end{split}
\end{align*}
Notice that after integrating out $\BIf$, $(\BIg,\BIs)$ are still sufficient for $\Bpsi$. 

By Bayes' rule, $p(\Bpsi|\y)=\frac{p(\y|\Bpsi)\pi(\Bpsi)}{p(\y)}$,
where $p(\y)=\int p(\y|\Bpsi)\pi(\Bpsi)\DIFF\Bpsi$ is the normalizing constant that is analytically intractable. The proof is now complete. 
\end{proof}

\subsection{Proof of Theorem~\ref{theorem:postpred}}
\label{apx:proofpostpred}
From (\ref{eqn:jointconditional}) we see that $\log p(\y,\BIf|\Bpsi)$ has the following form,
\begin{align*}
\log p(\y,\BIf|\Bpsi)=q_1(\Bpsi)+q_2(\BIs,\Bpsi)+q_3(\BIg,\BIf,\Bpsi),
\end{align*}
where $q_1,q_2,q_3$ are some functions of the corresponding parameters. Similarly, we can deduce that $\log p(\y,\BIf,f_*|\Bpsi)$ has the following form,
\begin{align*}
\log p(\y,\BIf,f_*|\Bpsi)=q_1(\Bpsi)+q_2(\BIs,\Bpsi)+q_3(\BIg,[\BIf,f_*],\Bpsi).
\end{align*}
Therefore,
\begin{align}
\begin{split}
p(f_*|\y,\Bpsi)=&\frac{p(\y,f_*|\Bpsi)}{p(\y|\Bpsi)}=\frac{\int p(\y,\BIf,f_*|\Bpsi)\DIFF\BIf}{\int p(\y,\BIf|\Bpsi)\DIFF\BIf}\\
=&\frac{\int \exp(q_3(\BIg,[\BIf,f_*],\Bpsi))\DIFF\BIf}{\int \exp(q_3(\BIg,\BIf,\Bpsi))\DIFF\BIf}.
\end{split}
\end{align}
Thus, we have deduced that $p(f_*|\y,\Bpsi)$ depends on $\y$ only through $\BIg$ (or equivalently, $\tilde{\BIg}$), i.e.\ $p(f_*|\y,\Bpsi)=p(f_*|\BIg,\Bpsi)=p(f_*|\tilde{\BIg},\Bpsi)$.
We also have that conditional on $\Bpsi$, $(\tilde{\BIg},f_*)$ are jointly Gaussian, with
$\EXP[f_*|\Bpsi]=\mu_*$, 
$\EXP[\tilde{\BIg}|\Bpsi]=\Bmu$, 
$\COV[f_*|\Bpsi]=\CC_*$, 
$\COV[\tilde{\BIg},f_*|\Bpsi]=\BIk_*$, 
$\COV[\tilde{\BIg}|\Bpsi]=\BUpsilon$,
which can be easily verified. 
Thus, we have that $(f_*|\y,\Bpsi)\sim\NORMAL(\bar{f}_*,\sigma^2_*)$, where $\bar{f}_*=\mu_*+\BIk_*^T\BUpsilon^{-1}\left(\tilde{\BIg}-\Bmu\right)$, $\sigma^2_{*}=\CC_*-\BIk_*^T\BUpsilon^{-1}\BIk_*$.
The distribution of $(f_*,\Bpsi|\y)$ is given by $p(f_*,\Bpsi|\y)=p(f_*|\y,\Bpsi)p(\Bpsi|\y)$, 
and the posterior predictive distribution $(f_*|\y)$ is obtained by marginalizing over $\Bpsi$, $p(f_*|\y)=\int p(f_*|\y,\Bpsi)p(\Bpsi|\y)\DIFF\Bpsi$.
The proof is now complete.

\subsection{Proof of Theorem~\ref{theorem:sblue}}
\label{apx:proofsblue}
\begin{proof}
First, we claim that any linear estimator must have the form $h(\y)=\BIw^T\bar{\BIg}+b$, where $\BIw\in\R^N,b\in\R$. Due to the linearity of $h$ in observations $(y_{n,m})_{n=1:N,m=1:M_n}$, $\BIs$ is not involved. For $n=1,\ldots,N$, the weights of $(y_{n,m})_{m=1:M_n}$ must be the same due to symmetry. This proves the claim. 

Under quadratic loss, for $h\in\CH$, $R[\Pi,h]$ is given by
\begin{align*}
\begin{split}
R[\Pi,h]=&\EXP\left[\left(\BIw^T\bar{\BIg}+b-f_*\right)^2\right]\\
=&\BIw^T\EXP[\bar{\BIg}\bar{\BIg}^T]\BIw+b^2+\EXP[f_*^2]+2b\BIw^T\EXP[\bar{\BIg}]\\
&-2\BIw^T\EXP[f_*\bar{\BIg}]-2b\EXP[f_*].
\end{split}
\end{align*}
Differentiating $R[\Pi,h]$ with respect to $\BIw$ and $b$, we get
\begin{align*}
\frac{\partial R[\Pi,h]}{\partial\BIw}=&2\EXP\left[\bar{\BIg}\bar{\BIg}^T\right]\BIw+2b\EXP[\bar{\BIg}]-2\EXP[f_*\bar{\BIg}],\\
\frac{\partial R[\Pi,h]}{\partial b}=&2b+2\BIw^T\EXP[\bar{\BIg}]-2\EXP[f_*].
\end{align*}
Setting the partial derivatives to $\veczero$ and solving the equations gives the optimal weight vector and intercept,
\begin{align}
\widehat{b}=&\EXP[f_*]-\widehat{\BIw}^T\EXP[\bar{\BIg}],\\
\begin{split}
\widehat{\BIw}=&\left(\EXP[\bar{\BIg}\bar{\BIg}^T]-\EXP[\bar{\BIg}]\EXP[\bar{\BIg}]^T\right)^{-1}\left(\EXP[f_*\bar{\BIg}]-\EXP[f_*]\EXP[\bar{\BIg}]\right)\\
=&\COV[\bar{\BIg}]^{-1}\COV[\bar{\BIg},f_*].
\end{split}
\end{align}
One can verify that $(\widehat{\BIw},\widehat{b})$ is indeed minimizing the Bayes risk. Hence,
\begin{align}
\begin{split}
\widehat{h}_{\text{S-BLUE}}(\y)=&\widehat{\BIw}^T\bar{\BIg}+\widehat{b}\\
=&\EXP[f_*]+\COV[\bar{\BIg},f_*]^T\COV[\bar{\BIg}]^{-1}(\bar{\BIg}-\EXP[\bar{\BIg}]).
\end{split}
\end{align}

The terms $\EXP[f_*],\EXP[\bar{\BIg}],\COV[\bar{\BIg},f_*],\COV[\bar{\BIg}]$ can all be expressed in closed-form. The closed-form expressions and the details of the computation are given below. Let $\odot$ denote matrix entry-wise product. 
\begin{align}
\EXP[f_*]=&\mu_*,\\
\begin{split}
\EXP[\bar{\BIg}]=&\EXP[\EXP[\bar{\BIg}|\Bpsi]]=\diag\left(\EXP[\BIa]\right)\Bmu+\EXP[\BIb],
\end{split}\\
\begin{split}
\COV[\bar{\BIg},f_*]=&\EXP[\COV[\bar{\BIg},f_*|\Bpsi]]+\COV[\EXP[\bar{\BIg}|\Bpsi],\mu_*]\\
=&\diag\left(\EXP[\BIa]\right)\BIk_*,
\end{split}\\
\begin{split}
\COV[\bar{\BIg}]=&\EXP[\COV[\bar{\BIg}|\Bpsi]]+\COV[\EXP[\bar{\BIg}|\Bpsi]]\\
=&\EXP[\BA\BCC\BA+\varsigma^2\BM^{-1}\BA^{2}]+\COV[\BA\Bmu+\BIb]\\
=&\EXP[\BIa\BIa^T]\odot\left(\BCC+\varsigma^2\BM^{-1}+\Bmu\Bmu^T\right)\\
&+\diag(\Bmu)\left(\EXP[\BIa\BIb^T]+\EXP[\BIa\BIb^T]^T\right)\\
&+\EXP[\BIb\BIb^T]-\EXP[\bar{\BIg}]\EXP[\bar{\BIg}]^T,
\end{split}
\end{align}

In the terms $\EXP[\BIa]$, $\EXP[\BIb]$, $\EXP[\BIa\BIa^T]$, $\EXP[\BIb\BIb^T]$, $\EXP[\BIa\BIb^T]$, the expectations are evaluated entry-wise.
For example, entries of $\EXP[\BIa]$ are given by
\begin{align}
\begin{split}
\EXP\left[a_n\right]=&q^{(n)}_0+\sum_{k=1}^Kq^{(n)}_k\EXP\left[a_n|Z_n=k\right].
\end{split}
\label{eqn:sbluea}
\end{align}
Entries of $\EXP[\BIa\BIa^T]$ are given by
\begin{align}
\begin{split}
\EXP\left[a_ia_j\right]=&\begin{cases}
\sum\limits_{k=0}^Kq^{(i)}_k\EXP\left[{a_i}^2|Z_i=k\right]&\text{if}\;i=j,\\
\sum\limits_{k=0}^K\sum\limits_{k'=0}^Kq^{(i)}_kq^{(j)}_{k'}\EXP\left[a_i|Z_i=k\right]&\text{if}\;i\neq j. \\
\quad\quad\times \EXP\left[a_j|Z_j=k'\right]
\end{cases}
\end{split}
\label{eqn:sblueaa}
\end{align}
Entries of $\EXP[\BIb]$, $\EXP[\BIb\BIb^T]$ and $\EXP[\BIa\BIb^T]$ can be evaluated similarly. 
With the above equations, we are able to evaluate $\widehat{h}_{\text{S-BLUE}}$ efficiently.
\end{proof}

\end{document}